\newcommand{\ind}{{\mathds{1}}}
\newcommand{\bs}{\boldsymbol}
\DeclareMathOperator{\Prob}{Pr}
\DeclareMathOperator{\Ex}{\mathds{E}}
\DeclareMathOperator*{\argmax}{\arg\max}
\newtheorem{theorem}{Theorem}[section]
\newtheorem{lemma}[theorem]{Lemma}
\newtheorem{proposition}[theorem]{Proposition}
\newenvironment{example}[1][Example]{\begin{trivlist}
\item[\hskip \labelsep {\bfseries #1}]}{\end{trivlist}}
\title{  {\em \Large Conditional Modeling and the Jitter Method of Spike Re-sampling: Supplement} }
\author{Asohan Amarasingham \and Matthew T.~Harrison \and Nicholas G.~Hatsopoulos \and Stuart Geman}
\date{}
\begin{document}

\maketitle

\begin{leftbar}
This technical report accompanies the manuscript ``Conditional Modeling and the Jitter Method of Spike Re-sampling,'' \cite{jitter} and provides further details, comments, references, and equations that were omitted from this main text in the interest of brevity. To ease referencing, the sectioning of the report follows that of the main text.

A few of our remarks in the supplement may be of interest to a broad audience.  For quick identification, these high-level remarks are bordered by a left vertical bar, like the one to the left of this paragraph.  The bulk of this document, however, contains technical details about the various simulations and data analyses presented in the main text.  These would primarily be of interest to a reader who was hoping to reproduce our methods exactly.  There is also a self-contained Mathematical Appendix at the end of the supplement that provides a more formal treatment of jitter.     
\end{leftbar}

\section{Introduction}

\subsubsection*{Note on terminology}

\begin{leftbar}
Some authors prefer alternative terms for jittering, such as ``dithering'' \cite{gerstein,grun09,pazienti07,pazienti08}, ``teetering'' \cite{shmiel05}, or ``artificial jitter'' \cite{rokem}, etc., presumably to distinguish from another use of the word ``jitter'' as the intrinsic temporal variability in individual spikes. For example, in some situations involving highly reliable \cite{billimoria} or simulated or modeled \cite{pazienti07} spike trains, individual spikes can unambiguously be placed in correspondence with one another across trials. In that case, the temporal variability in a spike's timing, under this correspondence, can be quantified directly and is commonly called the ``spike jitter'' \cite{billimoria,tiesinga}. In this paper, we continue to use ``jitter'' in its resampling sense, leaving these two uses of the term to be disambiguated by context.
\end{leftbar}

\subsubsection*{Hypothesis testing and p-values}

The introduction mentions hypothesis testing.  Recall that a hypothesis test consists of
\begin{itemize}
\item A null hypothesis, often denoted $H_0$, which is a collection of hypothetical distributions for the data.
\item An alternative hypothesis, often denoted $H_a$ or $H_1$, which is another collection of distributions for the data, disjoint from $H_0$.
\item A critical region or rejection region, say $C$, which is a collection of possible outcomes of the data for which we reject $H_0$ whenever the data is in $C$.
\end{itemize}
We evaluate the performance of a hypothesis test by its error probabilities.  For a probability distribution $P\in H_0$, an error (a type I error) is made when the data occurs in the critical region. The probability of a type I error is $P(C)$.  For $P\not\in H_0$, including $P\in H_a$, an error (a type II error) is made when the data does not occur in the critical region. The probability of a type II error is $1-P(C)$.  The size of a hypothesis test is the maximum error probability for all distributions in $H_0$, i.e., $\max_{P\in H_0} P(C)$.  When designing a hypothesis test, one usually tries to keep the size below some prespecified level, say $0.05$.  The specification of $H_a$ is useful for choosing among tests whose size does not exceed this level, with the idea being to choose a test that has good power, i.e., low error probabilities, for distributions in $H_a$.  Consequently, $H_a$ is important for interpreting a rejection of $H_0$.  Note, however, that only $H_0$ is used for quantifying the size of a test.  Note also that there may be distributions in neither $H_0$ nor $H_a$, in which case a rejection of $H_0$ is considered the correct decision, even though $H_a$ does not include the data distribution.  This is an example of model misspecification and it can lead to erroneous scientific conclusions.

It is common to create a critical region via a test statistic, $T$, which is just a scalar summary of the data, and a critical value, $a$, so that the critical region is the event $C=\{T \geq a\}=\{x:T(x)\geq a\}$ (or some similar set, like $\{T \leq a\}$), where $x$ denotes a dataset.  A p-value, say $p$, is a special type of test statistic with the property that a critical region of the form $C=\{p\leq a\}$ always creates a hypothesis test of level $a$, that is, of size $\leq a$, for any $0\leq a\leq 1$.  In other words, $P(p\leq a)\leq a$ for all $P\in H_0$.  Given an ordinary test statistic $T$ and critical regions of the form $C=\{T\geq a\}$, one can always create a p-value via $p(x) = \max_{P\in H_0} P\bigl(\{x':T(x') \geq T(x)\}\bigr)$, where $x$ and $x'$ denote datasets.  P-values are useful for communicating hypothesis tests since each reader can choose his or her desired threshold for maximum probability of type I error.  We refer the reader to \cite{lehmann2005testing}, for example, for many more details and examples about hypothesis testing.

\subsubsection*{Poisson processes}

\begin{leftbar}
Poisson processes and their generalizations are the most basic types of point processes.  Although jitter methods are applicable for many non-Poisson processes (which is fortunate, since neural spike trains are often poorly approximated by Poisson processes \cite{amarasingham-chen-geman-harrison-sheinberg-poisson,kass-ventura-brown}), we use Poisson processes in many examples because of their simplicity and familiarity.
\end{leftbar}

Imagine partitioning time into extremely fine bins of length $\Delta$.  In each bin we (independently) flip a coin with probability $\lambda\Delta$ for heads.  Heads means that we observe an event in that time bin.  Tails means that we do not.  A (homogeneous) Poisson process is the generalization of this procedure for infinitesimally small time bins (that is, holding $\lambda$ fixed and letting $\Delta\to 0$).  If the value of $\lambda$ is allowed to vary across time bins, so that we get a function $\lambda(t)$ for infinitesimally small bins, then this is an inhomogeneous Poisson process with intensity (or rate) function $\lambda(t)$.  If we first randomly choose the function $\lambda(t)$ itself from some collection of possible functions, then this is a Cox process.  We refer the reader to \cite{daley2003introduction}, for example, for details about these and other point process models.

\subsubsection*{Trial-to-trial variability}

\begin{leftbar}
The introduction also mentions ``trial-to-trial variability'' and suggests that trial-to-trial variability may be as much a result of model misspecification as it is a result of non-stationarity in the data.  We would like to illustrate this comment with a simple example called ``amplitude variability'' or ``excitability variability'' \cite{brody99b}.  

Consider repeated independent trials of a point process.  Suppose we generate the data on each trial by first choosing randomly and independently a firing rate, say 50\% of the time the firing rate is 10 Hz and 50\% of the time the firing rate is 20 Hz.  Then we generate the spike train as a homogeneous Poisson process with the selected firing rate.  On the next trial we randomly choose the rate again from one of the two choices, and then generate the spikes.  The resulting process is a Cox process, since each trial is a conditionally Poisson process with a random rate.

Is there ``trial-to-trial variability'' in this example?  The answer depends on how we choose to {\em model} the data.  If our model class includes Cox processes, then we can view the trials as independent and {\em identically distributed}.  There is no non-stationarity, no ``trial-to-trial variability''.  Alternatively, if we wanted to model the data as a Poisson process, then the trials cannot be identically distributed, but must have different rates on different trials.  The data exhibit ``trial-to-trial variability'' from the point of view of our model class, and we would need to introduce trial-specific parameters to allow the Poisson intensity to change across trials.  In this example, ``trial-to-trial variability'' is not an intrinsic feature of the distribution generating the data, but rather appears as a consequence of our modeling assumptions.        

Of course, not all trial-to-trial variability results from a misspecified model.  If the first 5 trials are almost always 10 Hz and the next 5 are almost always 20 Hz, then the data cannot be both independent and identically distributed (iid) across trials, regardless of the model class.  We use the term ``trial-to-trial variability'' rather loosely in the main text.  Jitter methods are unaffected by coarse-temporal trial-to-trial variability, regardless of the source.
\end{leftbar}

\section{Spike re-sampling, spike jitter, and conditional inference}

\subsection{Trial shuffles and interval jitter}

\subsubsection*{Permutation tests}

Trial-shuffling corresponds to the statistical concept of a {\em permutation test}.  Given a sequence of data $x_1,\dotsc,x_n$, a permutation test is a test of the null hypothesis that there is nothing special about the {\em order} in which we observed $x_1,\dotsc,x_n$.  Monte Carlo permutation tests are easy to implement: simply randomly permute the order of the observed data, and do this many times, and check if the observed data looks unusual among the collection of permuted datasets.

There are a variety of interesting null hypotheses that can be tested with a permutation test.  For example, suppose that $x_1,\dotsc,x_m$ are iid samples from experimental condition 1 and $x_{m+1},\dotsc,x_n$ are iid samples from experimental condition 2.  If there is no difference in conditions, then there is nothing special about the ordering of the $x$'s, so a permutation test can be used to test for differences across conditions.

For another example, suppose that we have iid pairs $(x_1,y_1),\dotsc,(x_n,y_n)$ and we want to test whether $x_i$ and $y_i$ are independent.  If they are independent, then the ordering of the $x$'s does not matter, regardless of the ordering of the $y$'s.  So a permutation test can be used to test independence.  If $x_i$ and $y_i$ are simultaneously recorded spike trains, then this permutation test is the basis for ``shuffle correction'' in cross-correlation analysis.  In the text below we refer to trial shuffling as testing independence, but note that it is also testing that the trials are iid.  

\subsubsection*{Figure 1}

{\em Data generation.} On trial $k$, we first generate $\mu_{1,k},\dotsc,\mu_{40,k}$ iid uniform$(0,1)$.  Then we create the nonnegative function
\[ f_k(t) = \left(10 + \sum_{j=1}^{40} \sum_{\ell=-\infty}^{\infty} \frac{1}{2\sigma/\sqrt{2}}\exp\left(-\frac{|t+\ell-\mu_{j,k}|}{\sigma/\sqrt{2}}\right)\right)\ind\{t\in[0,1)\} \]
which satisfies $\int_0^1 f_k(t) dt = 50$ for any $\sigma > 0$, where $\ind\{A\}$ denotes the indicator (zero/one) function for the event $A$.  Each $f_k$ is the sum of a constant baseline function and $40$ different double-exponential probability density functions, which are wrapped around the interval $[0,1)$ to preserve the total integral.  For Figure 1 we use $\sigma = 0.05$.  Then we generate two independent observations from an inhomogeneous Poisson process with intensity function $f_k(t)$.  These are the two spike trains on trial $k$, one for each neuron.  (Later we create a third independent observation; the third spike train is used below in Figure 3.)  Each of the 100 trials is created in this manner (beginning with new random $\mu$'s).  This is a Cox process.  Figure 1A shows $f_1,\dotsc,f_5$ and the corresponding spikes on the first 5 trials.  Figure 1B, gray line, shows $\sum_{k=1}^{100} f_k(t)/100$.

We concatenated the trials to create a single long spike train for each of the spike trains.  (This is not important here --- we did it for algorithmic reasons --- but it does have tiny implications for edge effects at the trial boundaries.  It does not affect the validity of any of our statistical conclusions.  In particular, edge effects are not an issue for our interval jitter experiments below, because we placed interval boundaries at trial boundaries.)  Specifically, we mapped the spike times in trial $k$ from the interval $[0,1)$ to the interval $[k-1,k)$ by simply adding $k-1$ to the spike times.  For each neuron, this converts 100 trials of a one second spike train into a single observation of a 100 second spike train over the interval $[0,100)$.  Let $N_i$ be the total number of spikes from neuron $i$ and let $Y_{i,1}\leq Y_{i,2}\leq\dotsb\leq Y_{i,N_i}$ be the corresponding observed spike times after trial concatenation.  To recover the trial-relative times of a spike, we can use ${Y_{i,k}\bmod 1}$.

{\em PSTHs.} Peri-stimulus time histograms (PSTHs) were constructed using 50 ms box smoothing and averaging across trials.  In particular, the PSTH for neuron $i$ at trial-relative time $t\in(0,1)$ was
\[ \frac{1}{100}\sum_{k=1}^{N_i} \frac{\ind\{t-0.025 \leq (Y_{i,k}\bmod 1) < t+0.025\}}{0.050} \]
in units of Hz or spikes/s, where $\ind\{A\}$ is the indicator function of the event $A$, taking the value 1 if $A$ is true and 0 otherwise.  PSTHs were plotted using $t$ on a 2 ms grid as the two black lines in Figure 1B.

{\em CCHs and lag-0 synchronies.} Cross-correlation histograms (CCHs) were constructed using 2 ms box smoothing and considering lags up to $\pm 250$ ms.  The CCH at lag $\tau$ was 
\[ \sum_{k=1}^{N_1}\sum_{\ell=1}^{N_2} \ind\{\tau-.001 \leq Y_{2,\ell}-Y_{1,k} < \tau+.001\} \]
in units of total number of spike pairs across all trials.  CCHs were plotted for $\tau\in(.25,.25)$ on a 0.4 ms grid.  The CCH value at $\tau=0$ corresponds to the total number of $\pm 1$ ms precise synchronies.  Note that reversing the role of neuron $1$ and $2$ merely switches positive lags to negative lags (i.e., reflects the CCH horizontally about lag zero) and does not affect the definition of lag-0 synchronies.  The original CCH is shown in Figure 1C (black line).  The observed value of $c(0)$ is 575 (almost 6 synchronous pairs per trial).

{\em Shuffle-surrogate CCHs.}  A shuffle-surrogate CCH is created by randomly permuting the trial order for spike train 1 (i.e., before concatenating the spikes into a single trial), and then computing the CCH as above between the trial-shuffled version of spike train 1 and the original version of spike train 2.  We create a collection of $M=10,000$ such shuffle-surrogate CCHs, each using different independent realizations of random permutations of the trials.  We will denote the CCHs as $c_0, c_1,\dotsc, c_{M}$, where $c_0(\tau)$ is the original CCH at lag $\tau$ and $c_m(\tau)$, for $m > 0$, is the $m$th shuffle-surrogate CCH at lag $\tau$.  

{\em Lag-0 distribution.} The lag-0 distribution (Figure 1D) is simply a histogram of the values $c_0(0),\dotsc, c_{M}(0)$.  The black vertical line in Figure 1D shows where $c_0(0)$ occurs within this histogram.  It is well known that the right tail probability of this histogram is a p-value for testing the independence between the two spike trains, namely,
\[ \text{p-value} = \frac{1}{M+1}\sum_{m=0}^M \ind\{c_m(0)\geq c_0(0)\} \]  See the discussion of permutation tests above.

{\em Mean shuffle-surrogate CCH.} We also compute the empirical mean CCH after shuffling, that is
\[ \mu(\tau) = \frac{1}{M}\sum_{m=1}^M c_m(\tau) \]
The light gray horizontal curve in the middle of Figure 1C shows the mean shuffle-surrogate CCH as a function of lag.  Note that the average CCH after shuffling does not show the broad peak at zero.
The light gray vertical bar in the middle of the histogram in Figure 1D shows where $\mu(0)$ occurs within the lag-0 histogram.

{\em Shuffle-derived pointwise acceptance bands.}  To create the acceptance bands, we first sort the elements $c_0(\tau),\dotsc, c_{M}(\tau)$ to get a sequence $c_{(0)}(\tau)\leq \dotsb \leq  c_{(M)}(\tau)$.  Note that the indexing still starts at zero (corresponding to the minimum), which is not standard for order statistics.  Then we set $ a(\tau)= c_{(.025M)}$ and $ b(\tau)= c_{(.975M)}$.  The interval $[a(\tau), b(\tau)]$ contains (at least) 95\% of the shuffle-corrected CCH values at lag $\tau$ (including the original).  We repeat this separately for each $\tau$.  On Figure 1C the dark gray region corresponds to the interval between $[a(\tau),b(\tau)]$ as $\tau$ varies.  With $\tau$ chosen a priori (say, $\tau=0$), we can reject at level 0.05 the null hypothesis that the spike trains are independent whenever $c_0(\tau)$ does not land in the interval $[a(\tau), b(\tau)]$.  The dark gray region in Figure 1D corresponds to the part of the histogram between $a(0)$ and $b(0)$.    

{\em Controlling for multiple comparisons.}  If we do not pick $\tau$ a priori (or if we pick more than one $\tau$ a priori), and we look for some $\tau$ where $c_0(\tau)$ is outside of the pointwise acceptance interval, then we cannot reject at level 0.05 and must do something to control for multiple hypothesis tests.  Bonferonni is often too conservative.  Here is something simple that combines all $\tau$ in order to rigorously test independence that we have found works well in many situations and provides a nice visual display.  First, robustly standardize the collection of CCHs at each $\tau$ and then compute the maximum and minimum of each standardized CCH, that is, compute
\[ \begin{gathered}  \nu(\tau) = \frac{1}{M-1}\sum_{m=1}^{M-1}  c_{(m)}(\tau) \quad\quad\quad
 s(\tau) = \sqrt{\frac{1}{M-2}\sum_{m=1}^{M-1} \bigl( c_{(m)}(\tau)- \nu(\tau)\bigr)^2} \\ 
 c^*_m(\tau) = \frac{ c_m(\tau) -  \nu(\tau)}{ s(\tau)} \quad\quad\quad  c^+_m = \max_\tau  c^*_m(\tau) \quad\quad\quad  c^-_m = \min_\tau  c^*_m(\tau) \end{gathered} \]  
(Notice that $\nu(\tau)$ and $s(\tau)$ are computed without the extreme values, and are therefore robust to single outliers in either tail.  Notice also that they are based on the sorted values, and so may contain the original CCH.) 
Now order the maximums and minimums, i.e., $ c_{(0)}^+ \leq \dotsb \leq  c_{(M)}^+$ and $ c_{(0)}^- \leq \dotsb \leq  c_{(M)}^-$.  If $ c_{0}^* <  c_{(0.025M)}^-$ or $ c_{0}^* >  c_{(0.975M)}^+$, then we can reject the null hypothesis of independence at level 0.05.  Although it takes a while to wade through all of the preprocessing, it is straightforward to verify that the maximum (or minimum) statistics (which combine all time lags) can be shuffled without changing their distribution under the null hypothesis of independence.  The reader is referred to \cite{westfall1993resampling, nichols2003controlling} for more details about resampling based multiple testing.

We note that independence is rejected for the data in Figure 1 (as it should be --- the spike trains are correlated on coarse timescales by virtue of sharing the same random intensity function) using this particular method of controlling for multiple comparisons.

{\em Shuffle-derived simultaneous acceptance bands.}  The multiple comparisons method described above can be used to create {\em simultaneous} acceptance bands, which provide a nice visual display to augment pointwise bands.  They also show which time bin offsets were the most unusual from a multiple comparisons perspective.  Define $a^*(\tau)=c_{(0.025M)}^- s(\tau)+ \nu(\tau)$ and $b^*(\tau)=c_{(0.975M)}^+ s(\tau)+ \nu(\tau)$.  It is straightforward to verify that we reject the null hypothesis using the multiple comparisons method described above exactly when there exists a $\tau$ for which $c_0(\tau)$ is not in the interval $[a^*(\tau),b^*(\tau)]$.  On Figure 1C the light gray region corresponds to the interval between $[a^*(\tau),b^*(\tau)]$ as $\tau$ varies.  The CCH exceeds the simultaneous bands, signaling a rejection of the null hypothesis.  The light gray region in Figure 1D shows the points between $a^*(0)$ and $b^*(0)$ for easy comparison to Figure 1C, but note that the entire collection of CCHs, not just the lag-0 values, are used in the construction of simultaneous acceptance bands.

{\em Shuffle-corrected CCHs.}    For visualization purposes, we show the {\em shuffle-corrected CCH} as a black line in Figure 1E.  The shuffle-corrected CCH is the original CCH minus the shuffle mean, i.e., $c_0(\tau)-\mu(\tau)$.  Under the null hypothesis of independence, the shuffle-corrected CCH has expected value zero for all $\tau$, so ``large'' variations from zero (solid light gray line in Figure 1E) are evidence against the null.  The acceptance bands described above are one way to quantify ``large''.  On Figure 1E we also show the shuffle-corrected version of these bands, namely, the dark gray region corresponds to the interval between $[a(\tau)-\mu(\tau),b(\tau)-\mu(\tau)]$ as $\tau$ varies and the light gray region corresponds to the interval between $[a^*(\tau)-\mu(\tau),b^*(\tau)-\mu(\tau)]$ as $\tau$ varies.  Note that the shuffle-corrected CCH exceeds the shuffle-corrected bands exactly when the original CCH exceeds the original bands.  The ``correction'' is only for visualization purposes.

\subsubsection*{Interval jitter}

\begin{leftbar}
{\em Additional remarks.}  As with trial shuffling, interval jitter is a procedure for generating an ensemble of spike processes from a single observed process.  The difference is that spikes are jittered within a priori defined windows (``intervals'') rather than shuffled across trials.  Thus for a jitter window of $\delta$ ms, the recording is first partitioned into successive $\delta$ ms intervals. An ensemble of spike processes is then generated by relocating each spike, independently, to a random point within its original $\delta$ ms interval. Everything else is the same: a statistic, such as the number of one-millisecond synchronies, is chosen, and the observed value from the original, unperturbed, spike train is compared against the ensemble of values collected through the jitter process.  If many statistics are available, such as all of the lags in a CCH, then pointwise and simultaneous acceptance bands, and ``jitter-corrected'' displays can be constructed, as well. 
\end{leftbar}

\subsubsection*{Figure 2}

{\em Interval jitter.}  The $m$th interval-jitter surrogate for neuron $i$ using jitter windows of length $\delta>0$ is
\[ \bigl(Y^{(m)}_{i,1},\dotsc,Y^{(m)}_{i,N_i}\bigr) = \text{sort}\bigl(\delta\lfloor{Y_{i,1}/\delta}\rfloor+\delta U^{(m)}_{i,1}, \dotsc, \delta \lfloor{Y_{i,N_i}/\delta}\rfloor+\delta U^{(m)}_{i,N_i}\bigr) \]
where $U^{(m)}_{i,k}$ are iid uniform$(0,1)$ random variables for all $i,k,m$, and where $\lfloor{\cdot}\rfloor$ is the floor function (greatest integer less than).  In words, $\delta\lfloor{Y_{i,k}/\delta}\rfloor$ moves the $k$th spike of neuron $i$ to the start of its $\delta$-length jitter window.  Then we add $\delta U^{(m)}_{i,k}$ to get a uniformly chosen location within that same jitter window.  We do this independently for every spike of each spike train.  Finally, we sort the results of each spike train so that the spike times are increasing.   We used $\delta=0.02$ and repeated this $M=10,000$ times, i.e., $m=1,\dotsc,10,000$, to create $10,000$ surrogate jittered datasets.

{\em Replacing trial-shuffling with interval jitter.}  For each of the $M=10,000$ surrogate jittered datasets we computed the CCH between spike train 1 and 2, giving $M$ interval-jitter surrogate CCHs, say, $c_1,\dotsc,c_M$, using the same notation for the shuffle surrogate CCHs described above.  As before, it is convenient to let $c_0$ denote the CCH of the original (unjittered) dataset.  Now everything proceeds exactly as described above for the collection of shuffled-derived CCHs, except we use the collection of jitter-derived CCHs.  The lag-0 values, namely, $c_0(0),c_1(0),\dotsc,c_M(0)$, are still the $\pm 1$ ms synchronies, but now the surrogates correspond to the number of synchronies after jittering (instead of after shuffling).  P-values are computed the same way and acceptance bands are computed the same way.  The corresponding hypothesis test is described below.  For jitter-correction, we subtract the mean, say $\mu(\tau)$, of the jitter-surrogate CCHs from the original CCH and any acceptance bands for easy visualization.    

\subsection{Hypothesis Testing and Conditional Inference}

\begin{leftbar}
{\em Interval jitter null hypothesis.} Fix the jitter window length $\delta$.  Define
\[ S_i(k) = \#\bigr\{\text{spikes from neuron $i$ in time interval $[\delta(k-1),\delta k)$}\bigl\} \]
where time refers to absolute time, not trial-relative time.
For the data in Figure 1 with $\delta=0.02$, for example, $S_1$ and $S_2$ are each length $5000$ nonnegative integer-valued vectors.  The interval jitter null hypothesis is
\begin{itemize}
\item[$H_0$:] The conditional distribution of the data is {\em uniform} given the vectors $S_i$ for all $i$.
\end{itemize}
Notice that $H_0$ depends on $\delta$ via the definition of the $S$'s.  $H_0$ makes no assumption about the distribution of the $S$'s, which are coarse-temporal statistics of the spike trains.  
Therefore, $H_0$ says nothing about the comparison of two datasets with different interval counts.
On the other hand,
if two datasets have exactly identical $S$'s, even if the precise timings of spikes are completely different, then $H_0$ states that the two datasets are equally likely to have occurred.  

If (and only if) the $S$'s are independent Poisson random variables, then the null hypothesis states that the spike trains are independent inhomogeneous Poisson processes with piecewise constant intensity functions that are constant within jitter windows.  This is but one of infinitely many classes of point processes that are included in $H_0$.  For example, the null hypothesis also includes doubly-stochastic Poisson processes (Cox processes) whose random inhomogeneous intensity function is piecewise-constant over the jitter intervals, in which case the $S$'s are neither independent nor Poisson.

The version of interval jitter used here places jitter window boundaries at the points $\dotsc,-3\delta,-2\delta,-\delta,0,\delta,2\delta,3\delta,\dotsc$.  There is nothing special about using these particular jitter window boundaries.  Indeed, there is no reason that they need all be the same length.  The key for creating a valid hypothesis test is that the jitter window boundaries do not depend on the data and that the boundaries are in the same locations for all jitter surrogates of the same spike train.  Of course, the boundaries influence the interpretation of the resulting hypothesis test, since they create the null hypothesis.  
\end{leftbar}

There are many variations on this interval jitter theme.  A common one is to jitter only neuron 1 (say) and leave all of the other neurons fixed.  This tests hypotheses about the temporal resolution of neuron 1 with respect to the others, without imposing assumptions about the temporal resolution of the others.  The null hypothesis would be
\begin{itemize}
\item[$H_0$:] The conditional distribution of the data is {\em uniform} given the vector $S_1$, and given the values of all spike times other than those from neuron 1.
\end{itemize}
Since we are conditioning on more aspects of the data, this single neuron version of interval jitter is an even larger null hypothesis than the original.  Consequently, tests will tend to be more conservative (harder to reject), but a rejection here implies a rejection of the original null.  For computational reasons, we use the single neuron version of interval jitter below when we discuss relaxing the uniformity assumption in the null hypothesis. 

\subsection*{Conditional inference}

\begin{leftbar}
{\em Conditional versus unconditional inference.}  Jitter and many of the related resampling techniques currently in use in neuroscience are best interpreted as Monte Carlo conditional inference, or perhaps, Monte Carlo approximate conditional inference.  We have noted a systematic tendency to blur the distinction between conditional inference and unconditional inference with regard to these jitter-like resampling techniques.  Unconditional bootstrap, in particular, seems to be a common (mis)interpretation, perhaps because bootstrap computations are strongly associated with Monte Carlo methods.  Mistaking a conditional procedure for an unconditional one can lead to statistical and scientific errors.  This seems especially the case for the types of neuroscience examples that we have in mind.  In this section we provide a simple simulation example to illustrate the differences between conditional and unconditional inference.

Consider 100 trials, each lasting 1 second, of two simultaneously recorded spike trains.  We are interested in precise zero-lag correlations and want to understand the distribution of the statistic $T=c(0)$, defined above as the total number of $\pm 1$ ms synchronous spike pairs.  For example, we may want to understand the distribution of $T$ in order to construct accurate critical values for hypothesis testing.

For unconditional inference we want to understand the unconditional distribution of $T$.  For example, suppose that the neurons are independent homogeneous Poisson processes with rates 50 Hz (neuron 1) and 25 Hz (neuron 2), respectively, and that trials are iid.  We generated $10^5$ datasets from this distribution, computed the value of $T$ on each one, and displayed the resulting empirical distribution in Supplementary Figure S\ref{fig:jit} Panel A.  This is an excellent approximation of the true unconditional distribution of $T$ because we used so many Monte Carlo samples (from the true distribution).  If we did not know that the neurons were 50 Hz and 25 Hz homogeneous Poisson processes, but were given some data, we might try to use bootstrap to approximate this unconditional distribution of $T$.

For conditional inference we want to understand the conditional distribution of $T$ given some other event.  For interval jitter, this other event is the observed sequence of spike counts in all of the jitter windows for all neurons.  Supplementary Figure S\ref{fig:jit} Panels B--E show four different examples of conditional distributions of $T$ given different observed sequences of spike counts in 5 ms jitter intervals.  Again, each of these is actually an empirical distribution using $10^5$ Monte Carlo samples from the respective conditional distributions.  Sampling from the true conditional distribution is easy because the spike times are conditionally uniform given the spike counts.  For any given dataset, we will have a specific sequence of spike counts, which gives a specific conditional distribution of $T$, and this (and only this) is the distribution we want to understand.  

Notice that the conditional distributions can be quite different from the unconditional distribution.  Notice also that these conditional distributions are what interval jitter generates.  Finally, notice that a great many distributions (other than 50 Hz and 25 Hz independent homogeneous Poisson processes) give rise to the same conditional distributions of $T$.  Most of these processes will have markedly different unconditional distributions of $T$.  
\end{leftbar}

\begin{SCfigure}
\centering
\epsfig{file=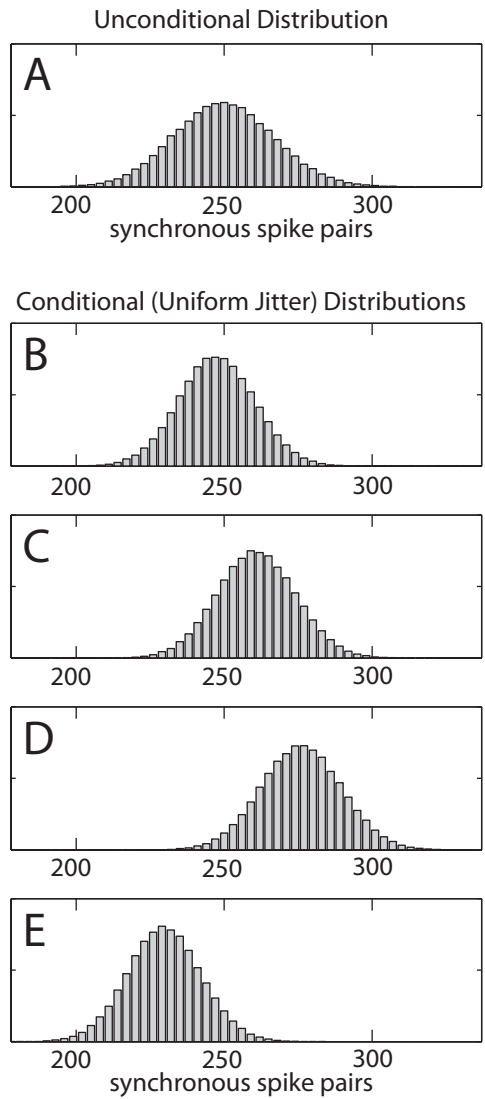}
\caption{ {\bf Unconditional versus conditional distributions.} Panel A shows the distribution of the total number of 1 ms synchronous pairs in 100 iid one-second trials between two independent 50 Hz and 25 Hz homogeneous Poisson processes.  This distribution was estimated using $10^5$ Monte Carlo observations.  The panels B--E correspond to four different observations, respectively, among the $10^5$ observations used to create panel A.  In each case, panels B--E show the conditional distribution of the total number of 1 ms synchronous pairs {\em given the sequence of spike counts in 5 ms windows}.  Each of these conditional distributions was approximated with $10^5$ uniform interval jitter Monte Carlo surrogates.  Since none of four exemplars have the identical sequence of spike counts, the conditional distributions are different.  All of the 5 graphs are on the same scale.  The observed synchrony counts for the original data from panels B--E were 225, 253, 274, and 235, respectively. \label{fig:jit}}
\end{SCfigure}

{\color{black}
\subsection{Synchrony as a Test Statistic}

Although any test statistic can be used in conjunction with jitter, we note that the choice of test statistic strongly affects both the statistical power of jitter-based hypothesis tests and the interpretation of any scientific conclusions drawn from jitter about the alternative hypothesis.
}

\subsection{Basic jitter}

{\em Basic jitter.}  Basic jitter is a heuristic procedure that jitters spikes in windows {\em centered at the original spikes}.  For basic jitter, the $m$th surrogate of spike train $i$ is
\[ \bigl(Y^{(m)}_{i,1},\dotsc,Y^{(m)}_{i,N_i}\bigr) = \text{sort}\bigl(Y_{i,1}+\delta (U^{(m)}_{i,1}-1/2), \dotsc, Y_{i,N_i}+\delta (U^{(m)}_{i,N_i}-1/2)\bigr) \]
Basic jitter does not correspond to Monte Carlo sampling from any conditional distribution.  If it did, then the conditioning event would have to specify (either explicitly or implicitly) the center of the jitter windows, which means it would specify the locations of the original spikes, which means the resulting conditional distribution would have no variability (it would be a point mass on the locations of the original spikes).  This is true for any type of spike-centered jitter, regardless of the distribution used to jitter spikes (here, uniform, but other authors have used Gaussian or triangular jitter distributions centered at the original spikes).  No such procedure should be interpreted as a statistical hypothesis test.

\subsection{Accidental synchrony}

\subsubsection*{Figure 3}

{\em Injected synchrony.} Locations for injected pairs were produced by sampling from the same inhomogeneous Poisson processes used in the original experiments depicted in Figure 1, but with the intensity functions scaled down from 50 Hz to 0.25, 0.50, and 0.75 Hz, respectively for the three additional experiments, corresponding to a mean total number of injected synchronous pairs of 25, 50, and 75 over the 100 trials. In order to maintain identical marginal (single-neuron) statistics following the addition of the synchronous pairs, each Poisson spike process was first thinned by subjecting every spike, independently, to random elimination. The actual probability of elimination was, for example, 0.005 for the
experiment with 0.25 Hz injected synchrony, and more generally was $h$ / 50 for the experiment with synchronies injected at $h$ Hz. 

Here are the exact details.  We refer to the data generation description above for Figure 1.  We begin with the identical dataset used in Figures 1 and 2.  We also create a third independent spike train from the same distribution, i.e., with intensity function $f_k(t)$ on trial $k$.  Let $Z_{i,k,j}$ denote the $j$th trial-relative spike time on the $k$th trial of neuron $i$, where $i=1,2,3$, and $k=1,\dotsc,100$, and $j=1,\dotsc,N_{i,k}$, where $N_{i,k}$ is the number of spikes from neuron $i$ on trial $k$.  The spike times (the $Z$'s) for these three spike trains are fixed for each of the different injected synchrony experiments.  We will use them to build datasets with differing amounts of synchrony.  

Now we generate iid uniform$(0,1)$ random variables $U_{i,j,k}$, one for each spike time $Z_{i,j,k}$.  These $U$'s are also fixed for all of the experiments.  To generate a dataset with $0 \leq h \leq 50$ Hz injected synchrony, we create two new spike trains as follows:
\begin{itemize}
\item The spike times in the new spike train 1 are those $Z_{1,j,k}$ with $U_{1,j,k} \leq 1-h/50$ and also those $Z_{3,j,k}$ with $U_{3,j,k} < h/50$.
\item The spike times in the new spike train 2 are those $Z_{2,j,k}$ with $U_{2,j,k} \leq 1-h/50$ and also those $Z_{3,j,k}$ with $U_{3,j,k} < h/50$.
\end{itemize}
Notice that the two new spike trains share the selected spikes from the third old spike train.  It turns out that each of the two new spike trains has the same distribution, individually, as those from the dataset in Figure 1.  (This is a result of the fact that thinned Poisson processes and superimposed independent Poisson processes are both still Poisson processes.)  When $h=0$, it is the same dataset and the two new spike trains are, of course, still independent.  But when $h > 0$ the two spike trains are no longer independent.  They share some spikes.  We use $h=0.0, 0.25, 0.50, 0.75$ in the experiments for Figure 3.  All other details are identical to the experiments in Figures 1 and 2.

\begin{leftbar}
The rationale for this method of creating an injected synchrony dataset is that the datasets for different amounts of injected synchrony are as similar as possible, and thus easier to compare.  The figures can, however, be misleading because (if they are interpreted as independent datasets) they give the impression of much smaller variability than would be actually observed across different experimental datasets with similar distributions.
\end{leftbar}

\begin{leftbar}
{\em Crude estimate of injected synchrony.}  The estimate of injected synchrony mentioned in Section 3.1 of the main text and in the Figure 3 legend is the height of the lag-0 peak in the jitter corrected CCH, namely, $c_0(0) - \mu(0)$, as discussed above in the Supplementary discussion of Figure 2.  We note that this estimate has substantial bias that can be improved by taking seriously the model of injected synchrony.  A small p-value should always be accompanied by some scientifically interpretable measure of the degree of departure from the null hypothesis.  This crude estimate serves just such a purpose for synchrony.  It is in units of total number of coincident spike pairs. Dividing by time gives an estimate of the injected synchrony rate.  An extremely small injected synchrony rate is unlikely to be scientifically interesting, regardless of how small the p-value is.    
\end{leftbar}

For convenience, the injected spike times are identical and could therefore be uniquely identified in the data.  Our test statistic, however does not take advantage of this.  The test statistic (millisecond-accurate synchrony count) detects 575 synchronous pairs even without any injected synchronies.  Indeed, the simulations and estimates are essentially unchanged if the injected spikes are not identical, but are offset by some random amount on the order of 1 ms.  This would be more realistic, but adds an additional layer of complication in the simulations, especially if we want to ensure that the marginal (i.e., single spike train) distributions are truly identical across differing amounts of injected synchrony.

\subsection{Temporal resolution}

\subsubsection*{Figure 4}

{\em Data distribution.}  We first generated $\mu_1,\dotsc,\mu_{40}$ iid uniform$(0,1)$.  The sorted values were 0.032, 0.034, 0.036, 0.046, 0.097, 0.098, 0.127, 0.142, 0.158, 0.171, 0.277, 0.278, 0.317, 0.392, 0.422, 0.485, 0.547, 0.632, 0.655, 0.656, 0.679, 0.695, 0.706, 0.743, 0.758, 0.792, 0.800, 0.815, 0.823, 0.849, 0.906, 0.913, 0.916, 0.934, 0.950, 0.957, 0.958, 0.959, 0.965, 0.971.  (These are the same $\mu$'s used on the first trial in the experiments in Figures 1--3.)  For the experiments here, these 40 values are fixed for all trials and all data sets.

For a fixed bandwidth $\sigma$, the 100 trials of each neuron are independent and they are each iid samples from an inhomogeneous Poisson process with intensity function
\[  f(t) = \left(10 + \sum_{j=1}^{40} \sum_{\ell=-\infty}^{\infty} \frac{1}{2\sigma/\sqrt{2}}\exp\left(-\frac{|t+\ell-\mu_{j}|}{\sigma/\sqrt{2}}\right)\right)\ind\{t\in[0,1)\} \]
This is the same as Figure 1, but here it is fixed for all trials (and the bandwidth is different for different experiments).  Recall that $\int_0^1 f(t) dt = 50$ for all $\sigma > 0$.  As $\sigma$ gets smaller, $f(t)$ becomes more concentrated around the $\mu_j$'s.  It approaches a constant of 50 as $\sigma$ gets very large, and it approaches a constant baseline of 10 with additional narrow spikes at each $\mu_j$ as $\sigma$ gets very small.  Our simulations do not venture to these extreme bandwidths, however.

Figure 4 uses bandwidths of $\sigma = 0.036, 0.020, 0.012, 0.008$, from top to bottom, respectively.

{\em Data generation.}  Much like Figure 3, the motivation for this sampling procedure is to make simulated datasets of different bandwidths be as similar as possible.  Each spike train is a superposition of many independent Poisson processes that combine together to give the final intensity function $f$ described above.  We will describe the process used to generate spike train 1 for each bandwidth.  The generation of spike train 2 proceeds identically and independently.  

The first piece corresponds to the 10 Hz baseline rate.  We generated 100 iid samples from a 10 Hz homogeneous Poisson process over $(0,1)$.  Each dataset (with different bandwidth) shares these spike times.  Then for each trial ($k=1,\dotsc,100$) and each $\mu_j$ ($j=1,\dotsc,40$) we generated iid Poisson random variables (not processes) with mean $1$.  Call these random variables $M_{k,j}$.  These $4000$ numbers are fixed and shared across all datasets.  For each $\sigma, k, j$ we generated $M_{k,j}$ iid samples from the probability density function (pdf)
\[   \left(\sum_{\ell=-\infty}^{\infty} \frac{1}{2\sigma/\sqrt{2}}\exp\left(-\frac{|t+\ell-\mu_{j}|}{\sigma/\sqrt{2}}\right)\right)\ind\{t\in[0,1)\}   \]
The point is that the number of spikes $M_{k,j}$ contributed by a given component $j=1,2,\dotsc,40$ to a given trial $k=1,2,\dotsc,100$ is the same for all bandwidths $\sigma$.
We did this by sampling from the double exponential distribution with pdf $\exp\bigl(|t-\mu_j|/(\sigma/\sqrt{2})\bigr)/\bigl(2\sigma/\sqrt{2}\bigr)$ and then taking the samples modulo 1 (which wraps samples outside of the unit interval back onto it).  For bandwidth $\sigma$ and trial $k$ the sorted collection of these samples and the original 10 Hz baseline spike times become the spike times on trial $k$ for neuron 1 in the bandwidth $\sigma$ dataset.

{\em Data processing.}  The data processing is identical to that in Figures 1--3.

\subsection{The jitter-corrected cross-correlation histogram}

Jitter-correction is described in the Figure 2 section above.  It is exactly analogous to shuffle-correction, except that it uses jitter, instead of trial-shuffling, to create the surrogates.  We note that the true expected value CCH after jittering can be computed exactly without using the empirical mean of jittered surrogates.  We do not describe the details here, since explicit creation of surrogates seems necessary to construct pointwise and simultaneous acceptance bands.

\section{Variations on the jitter theme}

\subsection{Rate of change of intensity functions}

\begin{leftbar}
Recall that if we consider only inhomogeneous Poisson processes, then interval jitter tests whether the time-varying intensity function is piecewise constant over jitter windows.  Sticking to the inhomogeneous Poisson model for motivation, a more realistic and interesting hypothesis about the time-varying intensity function would be that it is (essentially) piecewise linear, with a given bound on the percent change that can occur within any one jitter interval. Supplementary Figure S\ref{fig:piece} reproduces the first intensity function from the experiment discussed in Section 2.1, with a piecewise-constant approximation over 20 ms jitter intervals (panel A), and a piecewise-linear approximation, also over 20 ms intervals, but with the additional constraint that firing rate not change by more than 25\% within any one interval (panel B). Processes generated from the piecewise linear intensity function would be, for all intents and purposes, indistinguishable from processes generated from the original intensity function.

We provide some details below about how to modify interval jitter to allow for non-uniform jitter distributions, such as piecewise linear up to a certain amount of change as suggested above.
As with interval jitter, the null hypothesis is much more general than the set of Poisson processes with piecewise-linear intensity function. The null hypothesis is in terms of the conditional distribution on spike locations, given the numbers of spikes observed in the $\delta$ ms intervals. Conditioned on these numbers, the placements of spikes are assumed independent, with likelihoods that vary linearly within an interval and have a specified bound on change over the interval. (Uniform relocation is the special case in which the percent change within a jitter interval is bounded by zero.) The hypothesis can be tested in either member of a pair of recorded neurons. Spikes in the selected neuron are randomly relocated; the other neuron's spikes remain fixed. Synchrony is defined with respect to the second (``reference'') neuron: a spike in the selected neuron is synchronous if it falls within one ms of any of the (fixed) spikes in the reference neuron. As with interval jitter, the observed synchrony count is compared to the population of counts generated by the randomization process.

For each jitter interval of the selected neuron, the linear intensity function that maximally promotes synchronies is computed. These ``worst-case'' intensity functions define the spike-relocation distribution (``tilted jitter''). Formulas for computing worst-case intensities under the piecewise-linear hypothesis, and for various generalizations, are provided in Section B of the Mathematical Appendix. The point is that the resulting right-tail probabilities on synchrony counts are at least as big as the tail probabilities for any other local intensity functions that are consistent with the null hypothesis. Therefore, a test performed at a given level of significance under the single conditional hypothesis defined by these particular local intensity functions is simultaneously valid for any other
hypothesis in the compound null. In the search for evidence for fine-temporal structure it is conservative to resample spikes using these worst-case probabilities.
\end{leftbar}

\begin{SCfigure}
\centering
\epsfig{file=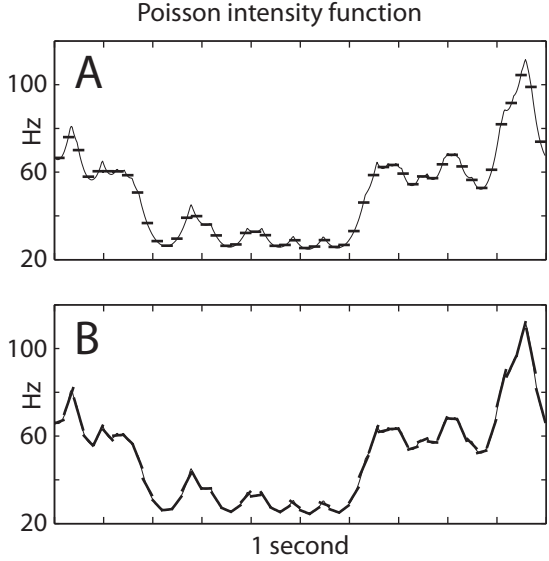}
\caption{ {\bf Piecewise approximations of a rate function.} The rate function shown in the upper-left panel of Figure 1A is approximated by a piecewise constant (panel A) and piecewise linear (panel B) function, with respect to a fixed 20 ms partitioning (``jitter intervals'').  The linear approximation is constrained to change by no more than 25\% within any 20 ms interval.  Subject to this constraint, the piecewise linear function shown in B best approximates the original.  Spike processes generated from the linear approximation would be nearly indistinguishable from processes generated from the original rate function. \label{fig:piece}}
\end{SCfigure}

{\em Non-uniform interval jitter null hypothesis.} There are many ways to relax uniformity in interval jitter.  Formally, we simply replace the word ``uniform'' in the specification of the null hypothesis with the description of another class of distributions that makes sense.  For example, we might replace ``uniform'' with ``a distribution that has at most 0.01 total variation distance from uniform''.  In practice, however, replacing the uniform distribution with some other class of distributions can create computational challenges.  The Mathematical Appendix discusses this in more detail.  Here we simply provide the details for the comments about non-uniform jitter in the main text and elsewhere in this supplement.

We experiment with replacing uniform with linear (see Proposition A.6 and Section B of the Mathematical Appendix).  For computational reasons, we fix the spike times of neuron 2 and only jitter the spikes of neuron 1.  In each jitter window (of each trial for neuron 1), given that there are $s$ spikes in that window, the null hypothesis dictates that these spike times are (conditionally) independent from the spike times in other windows and that the density on these $s$ spike times is iid with common pdf
\[ f_\theta(x) = \frac{1}{\delta}\left(1+\theta\biggl(\frac{x-a}{\delta}-\frac{1}{2}\right)\biggr)\ind\{a\leq x < a+\delta\} \quad \quad \quad \text{for some} \quad |\theta|\leq \frac{2\epsilon}{\epsilon+2} \]
where the jitter window is $[a,a+\delta)$ and where $\epsilon\geq 0$ is a bound on the allowable fraction of change in firing rate over any jitter interval.  The parameter $\theta$ is allowed to be different in each jitter window, but it cannot exceed the specified bound.  The strange looking form of the bound on $|\theta|$ comes from the fact that any such $\theta$ has
\[ \max_{x,y\in[a,a+\delta]} \frac{f_\theta(x)}{f_\theta(y)} - 1 \leq \epsilon \] 
The lefthand side ($\times$ 100\%) is the maximum percentage change in firing rate over the jitter interval $[a,a+\delta]$.  The null hypothesis constrains this to be less than $\epsilon$ ($\times$ 100\%).  
It is easy to verify that $f_\theta(x)\geq 0$ and integrates to one on $[a,a+\delta)$.
The null hypothesis has two parameters that control the scientific interpretation: the jitter window length ($\delta$) and the maximum fractional change in firing rate over any jitter window ($\epsilon$).  The case $\epsilon=0$ corresponds exactly to the original uniform null.  As $\epsilon$ increases, the null hypothesis enlarges to allow increasingly non-uniform processes.
In Section 3.1 of the main text, we mention allowing for various amounts of percentage change in firing rates within a jitter window.  These percentages refer to $\epsilon\times 100$\%.  

Also for computational reasons, we modify the synchrony test statistic for non-uniform jitter.  It is now (see the supplementary discussion for Figure 2)
\[ \sum_{i=1}^{N_1}\ind\left\{\min_{1\leq j \leq N_2} |Y_{1,i}-Y_{2,j}| \leq 0.001\right\} \]
which is the number of spikes in neuron 1 that participate in a synchronous pair (instead of the total number of synchronous pairs --- these two measures tend to be quite similar).  For this test statistic, it is computationally straightforward to choose the $\theta$ in each jitter window that creates the most amount of synchrony while satisfying the chosen $\epsilon$ bound (see Section B of the Mathematical Appendix).  Then we can jitter using the corresponding $f_\theta$ in each jitter window (instead of uniform jitter).  The resulting jitter distribution of synchrony is maximally conservative, meaning that we get a valid p-value for the non-uniform null hypothesis.

\subsection{Re-sampling patterns}

{\em Patterns.} Fix a parameter $R\geq 0$.  Consider a spike train $Y_1\leq \dotsb \leq Y_N$.  Define $Y_0=-\infty$ and $Y_{N+1}=\infty$ for notational convenience.  We can uniquely partition the spike train into patterns as follows: $Y_j,\dotsc,Y_k$ is a pattern if and only if 
\begin{equation}
\begin{aligned} & Y_j-Y_{j-1} > R & & \text{and}\\
& Y_{k+1}-Y_k > R & & \text{and} \\
& Y_{i+1}-Y_i \leq R  & & \text{whenever $j\leq i < k$}
\end{aligned}
\end{equation}
where $1\leq j\leq k \leq N$, in which case $Y_j$ is the starting time of the pattern.  
With a slight abuse of terminology, we will say that two patterns are the same if they have the same number of spikes and identical sequences of interspike intervals.
Two spike trains $Y_1\leq \dotsb \leq Y_N$ and $Y'_1\leq \dotsb \leq Y'_{N'}$ have the same sequence of patterns if and only if
\begin{equation}
\begin{aligned}
& N = N' & & \text{and} \\
& Y_{i+1}-Y_i = Y'_{i+1}-Y'_i & & \text{whenever $Y_{i+1}-Y_i \leq R$, and} \\
& Y'_{i+1}-Y'_i > R & & \text{whenever $Y_{i+1}-Y_i > R$}
\end{aligned}
\end{equation}
where $1\leq i < N$.  Notice that $Y$ and $Y'$ have the same patterns if and only if each spike in $Y$ has the same length $R$ history as the corresponding spike in $Y'$, where the length $R$ history of a spike $Y_i$ is the (perhaps empty) list of all spike times relative to $Y_i$ that occur in the $R$-length interval immediately preceding $Y_i$.  If there are 3 spikes in this interval, then the history would be $(Y_{i-3}-Y_i,Y_{i-2}-Y_i,Y_{i-1}-Y_i)$.

{\em The pattern-encoding statistic.} If $Y_{i,1}\leq \dotsb \leq Y_{i,N_i}$ is the spike train from neuron $i$, it is straightforward to verify that the following $N_i\times 2$ matrix only encodes the sequence of patterns and the sequence of length $\delta$ interval jitter windows that contain the start of each pattern: 
\[ V_i(\ell,1) = \ind\{Y_{i,\ell}-Y_{i,\ell-1} > R\} \quad \quad  \text{and} \quad \quad V_i(\ell,2) = \begin{cases} \lfloor{Y_{i,\ell}/\delta}\rfloor & \text{if $V_i(\ell,1)=1$} \\
Y_{i,\ell}-Y_{i,\ell-1} & \text{if $V_i(\ell,1)=0$} \end{cases} \]
for $\ell=1,\dotsc,N_i$, where we define $Y_{i,0}=-\infty$ for notational convenience.
The first column of $V_i$ indicates, with a one as opposed to a zero, which spikes begin a pattern.  The second column of $V_i$ gives the jitter window for spikes that begin a pattern and gives the preceding interspike interval for spikes that are within a pattern.

{\em The pattern jitter null.}  Fixing the history parameter $R$ and the jitter window length $\delta$, the pattern jitter null is
\begin{itemize}
\item[$H_0$:] The conditional distribution of the data is {\em uniform} given the matrices $V_i$ for all $i$.
\end{itemize}

{\em Monte Carlo pattern jitter.}  A fast algorithm for sampling from the pattern jitter null distribution is described in \cite{harrison-geman}.  We use the discretized version of the algorithm below, which constrains all spike times to a fine grid.  In many real applications, spike times are recorded at some fixed precision, and jitter surrogates should also be constrained to this same precision.  Monte Carlo pattern jitter surrogates can be used just like trial-shuffled surrogates or interval jitter surrogates to create p-values, acceptance bands, and visual ``corrections'' of CCHs or other graphical displays.

\subsubsection*{Figure 5}

The observed spikes (top row of each subfigure) are the spike times from neuron N1 (see Section 4) recorded between 3 and 4 seconds after the beginning of the experiment (a few seconds before the beginning of the first trial).  They were selected for illustration purposes.  The spikes were discretized at $1/30$th ms (30000 bins per second) and the discretized version of the pattern jitter sampling algorithm was used to create the example surrogates for different combinations of $\delta$ and $R$.

{\color{black} 
\subsubsection*{Pattern jitter for bursting neurons}

Here we experiment with a modification of the data sets used in Figure 3 (see Section 2.5) to illustrate the utility of pattern jitter for non-Poisson neurons.  In particular, we do a cross-correlation analysis of strongly bursting (artificial) neurons and show that pattern jitter is important in this context for preserving the nominal level of hypothesis tests.  This is interesting because the {\em expected} number of synchronies should be largely unaffected by the auto-correlation structure of individual neurons.  Nevertheless, the {\em distribution} is affected and so the auto-correlation must be accounted for when designing proper hypothesis tests.

We begin with the identical data set used for Figures 1 and 2.  The notation that we use here is described above in Section 2.5 for Figure 3.  Define $d_{i,k}=\lfloor{N_{i,k}/3}\rfloor$ to be the number of spikes in trial $k$ for spike train $i$ divided by 3 and rounded down to the nearest integer.  For each $i=1,2$ and $k=1,\dotsc,100$, we remove $2d_{i,k}$ spikes uniformly at random from that trial and that neuron.  Then we choose $d_{i,k}$ of the remaining spikes (again, chosen uniformly at random; this will be all of the remaining spikes if $d_{i,k}$ is a multiple of 3) and make them bursts of 3 spikes: we leave the original spike; we add a new spike uniformly in the interval $(8,9)$ ms after the original spike; we add another new spike uniformly in the interval $(16,17)$ ms after the original spike.  In the event that one of the new spikes lands outside of the trial interval, we repeat the entire process for that trial of that neuron.  This procedure takes the original data and makes the neurons burst (with high probability) in a succession of three spikes separated by about 8 ms per spike, while leaving the total number of spikes on each trial the same, and while leaving the time-varying, trial-varying intensity of the process largely unchanged.  These are highly non-Poisson spike trains.    

\begin{figure}
\centering
\epsfig{file=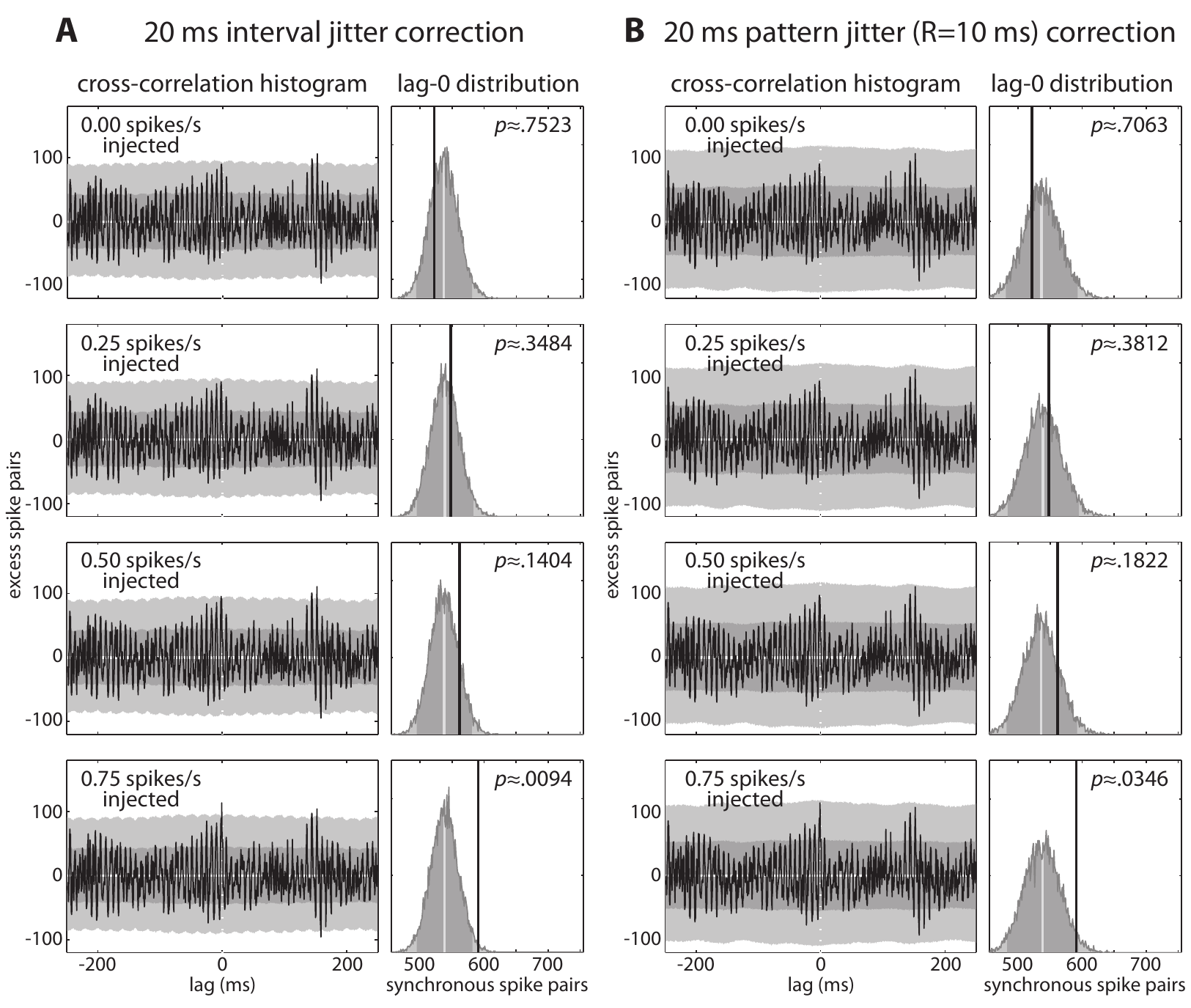}
\caption{ {\bf Interval versus pattern jitter corrected CCHs for bursting neurons.} Compare with Figure 3 in the main text.  The data set for this figure is described in the Supplementary section {\em Pattern jitter for bursting neurons}. \label{fig:patjit-sync}}
\end{figure}

Beginning with this as the baseline dataset for spike trains 1 and 2, and using the same spike train 3, we repeat the thinning/injection procedure of Figure 3 to create datasets with varying degrees of zero-lag injected synchrony.  Then we repeat the analysis of Figure 3, both with $\delta = 20$ ms interval jitter (Figure \ref{fig:patjit-sync}A) and with $\delta=20$ ms and $R=10$ ms pattern jitter (Figure \ref{fig:patjit-sync}B).  The simultaneous acceptance bands are violated with zero injected synchrony using interval jitter.  This is a correct rejection --- the bursting is a type of fine-temporal structure --- but it is not what we normally want with a cross-correlation analysis.  Implicit in a cross-correlation analysis is the understanding that the resulting conclusions are about the statistical relationship between two neurons.  We wanted to see synchrony (or a lack thereof) without artifacts introduced by the auto-correlation structure of the spike trains.  Pattern jitter, by preserving the bursts in the resamples, correctly calibrates the acceptance bands and does not report a rejection for the zero injected case.  Again, this is a correct conclusion for pattern jitter, because pattern jitter is a different null hypothesis that allows for certain types of auto-correlation structure (including the types of bursting introduced here).  Unlike interval jitter, however, the correct conclusion from pattern jitter corresponds with the standard scientific interpretation of significant correlation.  

As we add injected synchronies, pattern jitter behaves similarly to interval jitter from Figure 3 and begins to flag as significant the injected zero-lag synchronies (as it should --- injected synchronies are not in the pattern jitter null hypothesis).  Statistically speaking, pattern jitter has similar power as interval jitter for detecting injected synchronies in this case.

There are, of course, much more striking examples for illustrating the importance of pattern jitter, but the role of pattern jitter in these examples tends to be so obvious that detailed simulations are superfluous.  In particular, if the test-statistic depends highly on the auto-correlation structure of individual spike trains, such as the repeated spiking motifs of Section 4.1, and if the short-range auto-correlation structure is not of scientific interest, such as refractory periods and bursting, then pattern jitter, or a similar null hypothesis, is obviously important in order to focus inference on the features of interest, while ignoring the short-range auto-correlations of non-interest.  See \cite{oram99} for examples.
}

\section{Jitter analysis of three motor-cortical neurons}

\subsubsection*{Neurophysiological methods}

Three neurons (designated N1, N2, and N3) were selected from simultaneous multi-neuronal recordings made in primary motor cortex (MI) of the monkey. Single-unit activity was recorded from a multi-electrode array composed of 100 electrodes (1.0 mm electrode length; 400 $\mu$m inter-electrode separation) that was chronically implanted in the arm area of MI of one macaque monkey (Macaca Mulatto) --- see \cite{nicho2007} for details. During a recording session, signals from up to 96 electrodes were amplified (gain, 5000), band-pass filtered between 0.3 Hz and 7.5 kHz, and recorded digitally (14-bit) at 30 kHz per channel using a Cerebus acquisition system (Cyberkinetics Neurotechnology Systems, Inc., MA). Only waveforms (1.6 ms in duration; 48 sample time points per waveform) that crossed a threshold were stored and spike-sorted using Offline Sorter (Plexon, Inc., Dallas, TX). The monkey was operantly trained to move a cursor appearing above the monkey's hand location to targets projected onto a horizontal, reflective surface in front of the monkey. At any one time, a single target appeared at a random location in the workspace, and the monkey was required to move to it. As soon as the cursor reached the target, the target disappeared and a new target appeared in a new, random location. All of the surgical and behavioral procedures were approved by the University of Chicago�s IACUC and conform to the principles outlined in the {\em Guide for the Care and Use of Laboratory Animals} (NIH publication no. 86-23, revised 1985).

\subsubsection*{Figure 6}

Spike times from the first second of each successful trial were selected.  This results in 3 simultaneously recorded spike trains, each with 391 one-second trials.  Spike times in this data set are discretized to $1/30$th ms.  PSTHs and CCHs are computed identically as in Figure 1 (except with 391 trials instead of 100 trials).  Trial shuffling and the resulting data processing are also identical to Figure 1.

{\em Interval jitter.} For interval jitter (middle column) we partitioned each trial (i.e., the first second of each trial) into 50 jitter windows of length 20 ms each.  (This implies that our jitter analysis is also conditional on the starting times of successfully completed trials, since we used this information to select the data.)  Each 20 ms jitter window has 600 time bins of $1/30$th ms each.  We jittered each spike by uniformly and independently choosing one of the 600 bins {\em without replacement} in its respective jitter window.  (The modeling assumption here is that a time bin cannot have multiple spikes, which is realistic for such small time bins.  If the observed data has no time bins with multiple spikes, then sampling without replacement agrees with $R=0$ pattern jitter.)  

{\em Pattern jitter.} For pattern jitter (right column) we proceeded as in interval jitter, but used Monte Carlo samples from the $R=0.1$ seconds pattern jitter null.  Spike times outside of the first second of each trial were ignored.  (In practice, a more careful procedure would be to condition on these ignored spike times --- or better: pattern jitter the entire recording --- so that patterns extending outside of the trial boundaries are still preserved.  The differences are negligible in this dataset using the synchrony test statistic.  We ignored spikes outside of the first second so that the dataset would remain in the same format as our simulation experiments.) 

\subsection{Beyond synchrony: Other measures of precision}

\subsubsection*{Figure 7}

{\em Maximum repeating triplets test statistic.}  Let $Y_1\leq \dotsb\leq Y_N$ be the sequence of spike times discretized to $1/30$th ms and {\em measured in units of milliseconds}, so that $\text{round}(Y_\ell-Y_k)$ is the closest integer number of milliseconds that elapse between the $k$th and $\ell$th spike.  For each $i,j\in\{0,1,\dotsc,1000\}$, we first compute the number $H(i,j)$ defined as
\[ \begin{aligned} H(i,j) & = \sum_{k=1}^{N-2}\sum_{\ell=k+1}^{N-1}\sum_{m=\ell+1}^{N} \ind\{\text{round}(Y_\ell-Y_k)=i\}\ind\{\text{round}(Y_m-Y_\ell)=j\} \\ & \quad \times \ind\{Y_k,Y_\ell,Y_m \text{ all occur during the same successfully completely trial} \} \end{aligned} \]
which is the total number of three (not necessarily consecutive) increasing spike times such that the sequence of two intervals between the spike times (rounded to the nearest millisecond) is $(i,j)$ ms and such that all of the spike times come from the same successfully completed trial.  For computational simplicity, we only consider intervals up to 1000 ms.  Each $(i,j)$ pair defines a ``triplet''.  $H(i,j)$ counts how many times triplet $(i,j)$ repeats in the data.  (The data had an absolute refractory period of more than $1/2$ ms, so $H(i,j)=0$ whenever $i=0$ or $j=0$.)

The test statistic used in Figure 7 is the maximum number of repeating triplets, namely,
\[ \max_{i,j\in\{0,1\dotsc,1000\}} H(i,j) \]
Notice that pattern jitter with $R \geq d+0.5$ milliseconds exactly preserves the value of $H(i,j)$ for $i,j\leq d$.

{\em Pattern jitter.}  We generated surrogates from various pattern jitter null hypotheses with differing combinations of $R$ (history length) and $\delta$ (jitter window width) as before.  We conditioned on all spike times that were not part of a successfully completed trial, meaning that we held these spike times constant and included them in the definition of a pattern (but they were not included in the test statistic).  In fact, we also conditioned on the first and last spike time within in trial, to better ensure that there was no unexpected interaction between our choice of trial boundaries and the test statistic.  It turns out that none of the additional conditioning matters in the experiments here, but we feel that it is always good practice to err on the side of caution, especially when using complicated test statistics like the maximum number of repeating triplets.  

One can, of course, over-condition to the point that the surrogates do not have enough variability to draw meaningful conclusions (loss of power).  For our dataset the maximum value of $H$ was achieved with $(i,j)=(23,27)$, so that $R\geq 27.5$ ensures that $\max H$ can only increase with jittering.  We can thus never reject the null with $R\geq 27.5$ using this test statistic (and right-sided rejection regions).  Hence, choosing $R\geq 27.5$ certainly results in over-conditioning, and the problem likely begins for somewhat smaller $R$, as well.

\subsection{Temporal resolution and the neural representation of behavior}


\subsubsection*{Likelihood ratio tuning curves}  

\begin{leftbar}
One motivation for using likelihood ratio tuning curves 
\[ \theta(d) = \frac{P(D=d,S=1)}{P(D=d)P(S=1)} = \frac{P(D=d|S=1)}{P(D=d)} = \frac{P(S=1|D=d)}{P(S=1)} \] is the observation that $\theta(d)$ is proportional to the more familiar tuning curve $P(S=1|D=d)$ through multiplication by $P(S=1)$.  Multiplication by $P(S=1)$ conflates synchrony, per say, with directional tuning, and therefore complicates the interpretation of jitter or other resampling methods that might be used to calibrate the relationship between synchrony and direction, or, more generally, between fine-temporal events and other measures of behavior.

A second way to look at the role of likelihood ratios is to think of directional tuning as a collection of binary classification problems, one such problem for every direction $d$. Fix a single direction $d$ and consider the problem of guessing whether or not a certain event has occurred given an observed movement in that direction.  The ``event'' could be a spike in a particular neuron or the occurrence of a synchrony across two neurons, at about 100 ms (say) before the observed movement. Formally, the problem is to guess between $S=1$ and $S=0$ when the event of interest is a synchrony of spikes.  In both cases, the evidence is the movement direction, $D=d$.  These are binary classification problems and by the Neyman-Pearson lemma \cite{neyman1933problem,lehmann2005testing} 
the optimal classifier, in terms of the best tradeoff between sensitivity and specificity, is achieved by comparing the likelihood ratio, $\theta(d)$, to a threshold.  A particular threshold fixes a particular sensitivity/specificity pair; the set of all thresholds defines the optimal ROC (``receiver operating characteristic'') curve. 

For the reader more familiar with classification based on the ratio $P(D=d|S=1)/P(D=d|S=0)$, we note that 
\[ \theta(d) = \frac{P(D=d|S=1)}{P(D=d)} = \varphi\biggl( \frac{P(D=d|S=1)}{P(D=d|S=0)} \biggr) \] where $\varphi$ is the monotone function $\varphi(x)=x/\bigl(P(S=0)+P(S=1)x\bigr)$.  Hence the ratios are equivalent --- change the threshold and get the identical classifier.   
\end{leftbar}

\subsubsection*{Figure 8}

{\em Data preprocessing.}  Time was discretized to 1 ms bins.  The $(x,y)$ position of the hand (cursor) was smoothed with a Gaussian kernel using a bandwidth (standard deviation) of 25 ms and also shifted back in time 100 ms.  The offset of $100$ ms was chosen to approximate the delay between motor cortex and observed kinematics.  We will use $(x(t),y(t))$ to denote this smoothed and shifted position vector at time bin $t$.    The movement direction $D(t)$ at time bin $t$ was the angle of the vector $(x(t+1)-x(t-1),y(t+1)-y(t-1))$.  

For each neuron $i=1,2,3$, let $Y_{i,1},\dotsc,Y_{i,N_i}$ denote the spike times of neuron $i$.  For each pair of neurons $i,j$, we use $S_{i,j}(t)$ to denote the zero-one process of spike or no spike from neuron $i$ in time bin $t$ that happens to be within 1 ms of a spike from neuron $j$, namely,
\[ S_{i,j}(t) =\ind\bigl\{\text{$Y_{i,\ell}\in$ time bin $t$ and $|Y_{i,\ell}-Y_{j,k}|\leq 1$ ms for some $\ell=1,\dotsc,N_i$ and $k=1,\dotsc,N_j$}\bigr\} \]

When jittering, we jitter spikes before the 1 ms discretization step (i.e., using the original $1/30$th ms time bins), and then re-compute the $S$ processes (which are discretized at 1 ms bins).

{\em Estimating the likelihood ratio tuning curves.}  We abuse notation and use $P(D=d)$, which typically means the probability that $D$ equals $d$, when, in fact, we mean the probability density of $D$ at $d$.  The same comment applies for related quantities, like $P(D=d|S=1)$.  Note however, that $S$ is discrete, so $P(S=1)$ and related quantities are actual probabilities.

We estimate the density $P(D=d)$ using the list of values 
\[ \mathcal{D} = \bigl(D(t):\text{$t$ is within a successfully completed trial}\bigr) \]  We use kernel density estimation applied to the list of values in $\mathcal{D}$ with a Gaussian kernel with bandwidth $\sigma=0.1$ radians (which is approximately 6 degrees) that wraps around the interval $[-\pi,\pi)$.  Wrapping enforces the natural periodicity in $P(D=d)$ and prevents unnatural tapering for $d$ near $\pm\pi$.  The formula for our estimator is
\[ \widehat P(D=d) = \frac{1}{|\mathcal{D}|}\sum_{u\in\mathcal{D}} \sum_{\ell=-\infty}^\infty\frac{1}{\sigma}\phi((d-u+2\pi\ell)/\sigma) \ind\{d\in[-\pi,\pi)\} \]
where $\phi(z)=\exp(-z^2/2)/\sqrt{2\pi}$ is the standard normal density function.  This may look complicated, but except for $d$ near $\pm\pi$, it is virtually identical to the standard Gaussian kernel density estimator
\[ \frac{1}{|\mathcal{D}|}\sum_{u\in\mathcal{D}} \frac{1}{\sigma}\phi((d-u)/\sigma) \ind\{d\in[-\pi,\pi)\} \]
In fact, since our bandwidth is small, only $\ell=-1,0,1$ contribute to the sum in the formula for $\widehat P(D=d)$, and a simple trick to compute $\widehat P(D=d)$ is to use standard Gaussian kernel density estimation with the dataset $\mathcal{D} \cup (\mathcal{D}-2\pi) \cup (\mathcal{D}+2\pi)$ (and then multiply the resulting estimator by 3).  Note that $\mathcal{D}$ is a list of values, meaning that it can have repeat values.

We estimate $P(D=d|S_{i,j}=1)$ using the list of values
\[ \mathcal{D}_{i,j} = \bigl(D(t):\text{$t$ is within a successfully completed trial and $S_{i,j}(t)=1$}\bigr) \]  
again with Gaussian kernel density estimation with bandwidth $0.1$ radians and with wrapping around $[-\pi,\pi)$. 

Our final estimates of the likelihood ratio tuning curves are the ratios of these kernel density estimators, namely
\[ \widehat \theta_{i,j}(d) = \frac{\widehat P(D=d|S_{i,j}=1)}{\widehat P(D=d)} \]
Figure 8 shows $\widehat \theta_{1,2}$, $\widehat \theta_{1,3}$, and $\widehat \theta_{2,3}$, respectively, from left to right.

{\em Pattern jitter.} We use the identical pattern jitter procedure from Figure 7, but we independently jitter each of the three spike trains, not just N2.  For each jittered dataset, we compute new likelihood ratio estimators $\widehat \theta_{i,j}$ according to the identical procedure.  These estimators can potentially change a lot, because jittering can completely change which spikes are synchronous.  We created 1000 independent jittered datasets leading to $\widehat \theta_{i,j}^{0},\widehat \theta_{i,j}^{1},\dotsc,\widehat \theta_{i,j}^{1000}$, where $\widehat \theta_{i,j}^{0}$ comes from the original dataset and $\widehat \theta_{i,j}^{k}$ comes from the $k$th jitter-surrogate dataset.

{\em Acceptance bands and p-values.}  For the purposes of a test-statistic for hypothesis testing or for the purposes of constructing acceptance bands, there is no fundamental distinction between a CCH, say $c(\tau)$, and an estimator of a tuning curve, say $\widehat\theta(d)$.  We constructed acceptance bands for $\widehat\theta$ exactly as for the CCH, with one difference.  For the simultaneous acceptance bands, we first transformed $\widehat\theta$ to a logarithmic scale, namely, $\log\widehat\theta$, to remove some of the inherent asymmetry between large and small values of the likelihood ratio (which makes the standardization using $\nu$ and $s$ for the simultaneous bands behave better).  We computed simultaneous acceptance bands for $\log\widehat\theta$ and then transformed the bands back to the appropriate scale with an exponential operation.  Also, for Figure 8, we do not apply any ``correction'' (i.e., subtracting the mean of the surrogates to aid visualization) because we want to emphasize the the tuning curves vary strongly with direction.

\begin{leftbar}
The pattern jitter null hypothesis was not rejected using the simultaneous test for any of the three neuron pairs.  Although the estimated tuning curves do, in fact, exceed the pointwise acceptance bands at a few directions, we had no way to identify these directions a priori before seeing the data, so the simultaneous test is more appropriate.  (The situation is somewhat different for CCHs, because lag-0, corresponding to precise synchrony, can be identified as special a priori, and pointwise rejection at lag-0 is better justified.)  Even under the null hypothesis there should be some excursions outside of the pointwise bands (which we see, as expected), because each corresponds to a separate test of the same null hypothesis.  For the simultaneous bands, however, under the null hypothesis, the probability that the {\em entire} tuning curve falls within the light-grey bands is at least 0.95.  We do not expect to see any excursions outside of these bands.
\end{leftbar}

\section{Summary and Discussion}

{\color{black} Additional references can be found in the references section of this Supplement.  The list is not meant to be exhaustive.}

\newpage

\section*{Supplementary References}

Supplementary references are broken into categories in the following order: {\it On trial-to-trial variability; On jitter: methodology; On jitter: applications to neurophysiology; On spike timing; Statistical analysis of spike trains; Miscellaneous.} For topics of broad scope, we focus on prominent and/or representative selections that provide context for the main text.

\bibliographystyle{amsplain}

\begin{btSect}{statres_trial}
\subsection*{SR 1. On trial-to-trial variability}
\btPrintAll
\end{btSect}

\begin{btSect}{statres_jitter_meth}
\subsection*{SR 2. On jitter: methodology}
\btPrintAll
\end{btSect}

\begin{btSect}{statres_jitter_app}
\subsection*{SR 3. On jitter: applications to neurophysiology}
\btPrintAll
\end{btSect}

\begin{btSect}{statres_timing}
\subsection*{SR 4. On spike timing}
\btPrintAll
\end{btSect}

\begin{btSect}{statres_statistics}
\subsection*{SR 5. Statistical analysis of spike trains}
\btPrintAll
\end{btSect}

\begin{btSect}{statres}
\subsection*{SR 6. Miscellaneous}
\btPrintAll
\end{btSect}

\newpage

\appendix

\section*{Mathematical Appendix} 

In the main text, we illustrated the idea and purpose of jitter largely through a collection of spike-resampling algorithms, and for pedagogical reasons we connected the resampling algorithms to statistical concepts in words. In parallel, statistical methodologies, if properly formulated, can be used as tools for making mathematically-precise, probabilistic, statements. Here we provide this foundational view by placing those resampling algorithms used in the main text in correspondence with the statistical models, and null hypotheses, with which they are associated.

In statistical terms, the approach may broadly be understood as an instance of {\it conditional inference}. Null hypotheses are specified via conditional distributions on the data. Classes of conditional distributions express, as null hypotheses, mathematical models of various scientific questions concerning spike timing. The null hypotheses justify the resampling algorithms and related methods of inference.

The key ideas are present in the simplest case: discrete time, single neuron, uniform interval jitter.  We begin with a brief description of that special case, and then proceed with a systematic development of underlying mathematical principles.

A finely discretized spike train can be represented as a binary vector, say $\bs{x}=(x_1,\dotsc,x_m)$, where $x_i=1$ if there is a spike in time bin $i$, and $x_i=0$ otherwise.  Fix a jitter window width, say $\Delta$ time bins, and let $N_k(\bs{x})$ be the number of spikes in the time bins $\{(k-1)\Delta+1,\dotsc,k\Delta\}$, that is
\[ N_k(\bs{x}) = \sum_{i=(k-1)\Delta+1}^{k\Delta} x_i \]
where we use the convention that $x_i=0$ for $i > m$.
The sequence of spike counts is $\bs{N}(\bs{x})=\bigl(N_1(\bs{x}),\dotsc,N_r(\bs{x})\bigr)$.  The test statistic is any function $T$ that assigns a number to each spike train.
If $\bs{X}=(X_1,\dotsc,X_m)$ is a random spike train, then the uniform interval jitter null hypothesis is
\begin{itemize}
\item[$H_0$] The distribution of $\bs{X}$ satisfies
\[ \Prob\bigl(\bs{X}=\bs{x}\bigl|\bs{N}(\bs{X})=\bs{n})\bigr) = \frac{1}{Z(\bs{n})}\ind\big\{\bs{N}(\bs{x})=\bs{n}\bigr\} \]
for all $\bs{x}$ and $\bs{n}$, where $Z(\bs{n})=\sum_{\bs{x}}\ind\{\bs{N}(\bs{x})=\bs{n}\}$ is a normalization constant.  In other words, given $\bs{N}(\bs{X})$ the conditional distribution of $\bs{X}$ is uniform. 
\end{itemize}
Each jitter window width, $\Delta$, gives a new null hypothesis (via a new definition of the sequence of spike counts, $\bs{N}$);
larger values of $\Delta$ impose more severe restrictions on temporal resolution. 

$H_0$ is a well-defined null hypothesis: every possible distribution on $m$-length binary vectors either satisfies $H_0$ or it does not.  Testing this null hypothesis is elementary: the uniform interval jitter procedure gives exact p-values for any choice of test statistic.  There are essentially no remaining statistical or mathematical decisions.  There are, however, crucial scientific decisions.
\begin{itemize}
\item What is the choice of $\Delta$?  Or is $\Delta$ a parameter to be explored?
\item What is an appropriate $T$? 
\item Are $H_0$ and $T$ an appropriate abstraction of the scientific question?
\end{itemize}
The scientist must be able to answer these questions in order to use jitter appropriately.  Relating these questions to the typical neuroscientific uses of jitter is the focus of the main text.  The focus of the Mathematical Appendix is to elaborate on some of the mathematical details and generalizations.

\section{Resampling procedures and statistical hypothesis testing} \label{s:intro}

The Appendix is organized as follows. In Section \ref{Preliminaries}, we review standard definitions and properties of hypothesis testing and of exchangeable random variables. In Section \ref{Interval_Jitter_Section}, we provide a formal description of the resampling algorithm for uniform interval jitter and relate the algorithm to an exact hypothesis test. A key mathematical justification is that, under the null hypothesis, the original and resampled data sets are exchangeable random variables. In Section \ref{Basic_Jitter_Section}, we re-present {\it basic jitter} and point out that in the case of basic jitter the exchangeability condition is violated. For this reason, we advocate the view of basic jitter as an approximate, or heuristic, variation on interval jitter. In Section \ref{Pattern_Jitter_Section}, we present pattern jitter as a generalization of uniform interval jitter which incorporates constraints on interspike intervals, and state the corresponding null hypothesis. The justification of the resulting hypothesis test is similar to that of interval jitter. In Section \ref{Tilted_Jitter}, we move beyond uniform re-sampling. Here we describe the null hypothesis for tilted jitter, whose chief characteristic is that the conditional distribution of a spike's position within a window is {\it non-uniform} (in contrast to uniform interval jitter). Here we develop variations of tilted jitter in which inference is developed for the special case of analyzing spike synchrony. Some additional mathematical details associated with tilted jitter are contained in Section \ref{s:max}. Finally, Section \ref{deterministic} develops deterministic (non-Monte Carlo) variations on some of the tests previously described.

\subsection{Preliminaries on hypothesis testing and exchangeability} \label{Preliminaries}

\subsubsection{Statistical Hypothesis Testing}

Given a data set ${\bs X}$ modelled as a random variable, a null hypothesis $H_0$ is a class of probability distributions on ${\bs X}$. A hypothesis test is a test of the hypothesis that the distribution governing ${\bs X}$ is contained in $H_0$. Formally, we will express hypothesis tests in terms of $p$-values with respect to an explicit null hypothesis $H_0$.  A statistic $p({\bs X})$ is a {\it p-value} for a null hypothesis $H_0$ if it has the property that 

\begin{equation} \label{p-value}
\Prob( p({\bs X}) \leq u ) \leq u, \forall \, 0 \leq u \leq 1, 
\end{equation}
under the condition that the distribution $\Prob$ is in $H_0$. By custom, $p$-values are reported when hypothesis tests are used. They implicitly determine critical regions: define 
\begin{equation}
C(\alpha)=\{ {\bs X}: p({\bs X}) \leq \alpha \}.
\end{equation}
By definition, if $p$ is a $p$-value (satisfying (\ref{p-value})) for $H_0$, then, under $H_0$, 
\begin{equation} \label{crit_region}
\Prob( {\bs X} \in C(\alpha) ) \leq \alpha,  
\end{equation}
so that $C(\alpha)$ is a {\it critical region} for an $\alpha$-level test of $H_0$. This is the standard definition of a test. We note that, in what follows, $p$ will sometimes depend not only on the observed data $\bs{X}$, but also on Monte Carlo resamples, say, $\bs{X}^{(1)},\dotsc,\bs{X}^{(J)}$.  In this case, we require that $\Prob\bigl( p(\bs{X},\bs{X}^{(1)},\dotsc,\bs{X}^{(J)})\leq u\bigr)\leq u$ for all $u\in[0,1]$ whenever $H_0$ is true, where the probability is with respect to the joint distribution on $\bs{X}$ and the Monte Carlo resamples.  The critical region is similarly modified.  When the number of Monte Carlo samples, $J$, is large, the additional randomness introduced by using a Monte Carlo procedure is negligible.  We note that the Monte Carlo randomness only affects power: rejecting $H_0$ when $p\leq\alpha$ correctly bounds the type I error probability below $\alpha$ for any choice of $J$.     

The set-complement of the critical region is the {\it acceptance region}. Acceptance regions for single-tests are in correspondence to what we refer to as {\it pointwise acceptance bands} when multiple tests arise from an indexed family of tests and are plotted graphically. Simultaneous acceptance bands are developed in 2.1 of this Supplement, and discussed elsewhere in the Supplement and the main text.

\subsubsection{Exchangeability}
A finite collection of random variables $Y_1, Y_2,..., Y_n$ is {\it exchangeable} if its joint distribution is invariant to permutations of its arguments. That is, 
\begin{equation}
\Prob\bigl( (Y_1,\dotsc,Y_n) \in A \bigr) = \Prob\bigl( (Y_{\pi(1)},\dotsc,Y_{\pi(n)}) \in A \bigr)
\end{equation}
for all sets\footnote{To avoid cluttering the exposition, we leave any measure-theoretic qualifications as implicit. In any case, nothing unusual comes up. Where spike times are modelled in a continuum, intervals and spike counts in intervals are, naturally, taken as finite. Wherever they appear, densities and conditional densities $f$ are for spike times, and are used in the sense of elementary probability.} $A$ and all permutations $\pi$ of the index set $(1,...,n)$.\footnote{A permutation of an ordered set is a rearrangement of the set. For example, $\pi=(4,3,1,2)$ is a permutation of $(1,2,3,4).$ Here we are writing $\pi(1)=4, \pi(2)=3,$ etc.} 

A recurring idea in the appendix is that, under a suitably-formulated null hypothesis, the collection of test statistics derived from the original spike data sets and the resampled surrogates are exchangeable. Therefore, any test of the exchangeability of the collection of test statistics will be a test of that null hypothesis. The basic hypothesis test is the following test of exchangeability. (Below we use the symbol $\ind$ to denote the indicator function.\footnote{$\ind\{A\}$ denotes the indicator function of the set $A$, which takes the value 1 when the sample outcome falls in $A$, and 0 otherwise. In particular, $\sum_{i=1}^n \ind \{Y_i \geq Y_1\}=\#\{i : Y_i \geq Y_1, 1 \leq i \leq n \}$})

\begin{lemma} \label{ExchTest}
Suppose that $Y_1,..,Y_n$ are real-valued random variables. The statistic
\begin{equation}
p(Y_1,...,Y_n) := \frac{  \sum_{i=1}^n \ind \{Y_i \geq Y_1\} }{n} 
\end{equation}
is a $p$-value for the null hypothesis that $Y_1,...,Y_n$ are exchangeable.\footnote{For completeness, here is a proof of this well known fact. Exchangeability implies that 
$\Prob(p\leq\alpha) = \Prob\bigl(\textstyle\sum_{i=1}^n \ind\{Y_i\geq Y_k\} \leq n\alpha\bigr)$
for each $k$ and therefore
$\Prob(p\leq\alpha) = \textstyle n^{-1}\sum_{k=1}^n \Prob\bigl(\textstyle\sum_{i=1}^n \ind\{Y_i\geq Y_k\} \leq n\alpha\bigr) = n^{-1}\Ex\bigl[\textstyle\sum_{k=1}^n \ind\bigl\{\textstyle\sum_{i=1}^n \ind\{Y_i\geq Y_k\} \leq n\alpha\bigr\}\bigr] \leq n^{-1}\Ex[n\alpha] = \alpha$, where $\Ex$ denotes expectation.
(The steps are easiest to verify using the order statistics, which are the result of a permutation $o(1),o(2),\ldots,o(n)$ such that 
$Y_{o(1)}\leq Y_{o(2)}\leq \cdots \leq Y_{o(n)}$.)
}
\end{lemma}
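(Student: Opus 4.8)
The plan is to exploit the defining symmetry of exchangeability directly, together with an averaging (double-counting) argument over which index plays the distinguished role. Fix $0 \leq \alpha \leq 1$ and write $R_k := \sum_{i=1}^n \ind\{Y_i \geq Y_k\}$, so that $p(Y_1,\dotsc,Y_n) = R_1/n$ and more generally $R_k/n$ is the statistic obtained by putting $Y_k$ in the distinguished first slot. The goal is to show $\Prob(R_1/n \leq \alpha) \leq \alpha$ under the null hypothesis that $(Y_1,\dotsc,Y_n)$ is exchangeable.

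First I would observe that exchangeability gives $\Prob(R_1 \leq n\alpha) = \Prob(R_k \leq n\alpha)$ for every $k = 1,\dotsc,n$: the event $\{R_k \leq n\alpha\}$ is obtained from $\{R_1 \leq n\alpha\}$ by the permutation swapping indices $1$ and $k$, and by hypothesis the joint law is permutation-invariant. Averaging this identity over $k$ gives
\[ \Prob(R_1 \leq n\alpha) \;=\; \frac{1}{n}\sum_{k=1}^n \Prob(R_k \leq n\alpha) \;=\; \frac{1}{n}\Ex\!\left[\sum_{k=1}^n \ind\{R_k \leq n\alpha\}\right]. \]
The second step is the deterministic combinatorial heart of the argument: for any fixed real numbers $y_1,\dotsc,y_n$, the count $\#\{k : R_k \leq n\alpha\}$ is at most $n\alpha$. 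This is cleanest via the order statistics. Relabel so $Y_{o(1)} \leq Y_{o(2)} \leq \dotsb \leq Y_{o(n)}$; then $R_{o(j)} = \#\{i : Y_i \geq Y_{o(j)}\} \geq n - j + 1$ (at least the indices $o(j), o(j+1),\dotsc,o(n)$ contribute), so $R_{o(j)} \leq n\alpha$ forces $n - j + 1 \leq n\alpha$, i.e.\ $j \geq n - n\alpha + 1$. Hence at most $n\alpha$ values of $j$ (equivalently, of $k$) can satisfy $R_k \leq n\alpha$, giving $\sum_{k=1}^n \ind\{R_k \leq n\alpha\} \leq n\alpha$ surely. (One should be slightly careful about ties among the $Y_i$, but with ties the inequality $R_{o(j)} \geq n - j + 1$ only gets stronger, since tied values at or above $Y_{o(j)}$ are still counted in $R_{o(j)}$; so the bound is unaffected.)

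Combining the two steps: $\Prob(R_1 \leq n\alpha) = \tfrac{1}{n}\Ex[\sum_k \ind\{R_k \leq n\alpha\}] \leq \tfrac{1}{n}\cdot n\alpha = \alpha$, which is exactly the p-value property $\Prob(p \leq \alpha) \leq \alpha$ required by display~\eqref{p-value}. I do not anticipate a genuine obstacle here — the result is classical — but the one place to be attentive is the tie-handling in the combinatorial step and the bookkeeping that permutation invariance is being applied to the full joint distribution of $(Y_1,\dotsc,Y_n)$ (not just to marginals), which is what licenses moving from $\Prob(R_1 \leq n\alpha)$ to $\Prob(R_k \leq n\alpha)$. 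The footnote in the paper already sketches exactly this averaging argument, so the proof is essentially to flesh out those lines.
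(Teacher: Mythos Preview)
Your proposal is correct and follows exactly the averaging argument sketched in the paper's footnote: use exchangeability to equate $\Prob(R_1\leq n\alpha)$ with $\Prob(R_k\leq n\alpha)$ for each $k$, average, swap sum and expectation, and bound the inner sum deterministically via the order statistics. You have simply fleshed out the combinatorial step (including the tie remark) that the paper leaves implicit.
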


\noindent Independent and identically distributed (i.i.d.) random variables are exchangeable. We will also make use of the fact that conditionally independent and identically distributed random variables are exchangeable.
\begin{lemma} \label{CIID_Exch}
If $Y_1,...,Y_n$ are conditionally i.i.d.~(or, more generally, conditionally exchangeable), given a random variable $Z$, then $Y_1,...,Y_n$ are exchangeable.\footnote{Again, here is a proof of this well known fact.  $\Prob\bigl( (Y_1,\dotsc,Y_n) \in A \bigr) = \Ex\bigl[\Prob\bigl( (Y_1,\dotsc,Y_n) \in A \bigl|Z\bigr)\bigr] = \Ex\bigl[\Prob\bigl((Y_{\pi(1)},\dotsc,Y_{\pi(n)}) \in A \bigl| Z\bigr)\bigr] = \Prob\bigl( (Y_{\pi(1)},\dotsc,Y_{\pi(n)}) \in A \bigr)$, where the middle equality follows from the assumption that, given $Z$, the $Y$'s are exchangeable.
}
\end{lemma}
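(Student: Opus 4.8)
The plan is to deduce (unconditional) exchangeability from the conditional hypothesis by averaging over $Z$ with the tower property of conditional expectation. Fix a permutation $\pi$ of $(1,\dotsc,n)$ and a measurable set $A$; the goal is to show $\Prob\bigl((Y_1,\dotsc,Y_n)\in A\bigr) = \Prob\bigl((Y_{\pi(1)},\dotsc,Y_{\pi(n)})\in A\bigr)$.

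First I would express the left-hand side as $\Ex\bigl[\Prob\bigl((Y_1,\dotsc,Y_n)\in A \mid Z\bigr)\bigr]$. The hypothesis that $Y_1,\dotsc,Y_n$ are conditionally exchangeable given $Z$ says precisely that the conditional law of $(Y_1,\dotsc,Y_n)$ given $Z$ is permutation-invariant, so that $\Prob\bigl((Y_1,\dotsc,Y_n)\in A \mid Z\bigr) = \Prob\bigl((Y_{\pi(1)},\dotsc,Y_{\pi(n)})\in A \mid Z\bigr)$ almost surely. In the conditionally i.i.d.\ special case this identity is automatic, since the conditional joint law is a product of $n$ copies of a common conditional marginal and a product is unchanged by reordering its factors. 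Taking expectations of both sides and then applying the tower property once more in the opposite direction, $\Ex[\Prob(\,\cdot\mid Z)] = \Prob(\,\cdot\,)$, to the right-hand side yields the claim. This is essentially a three-line computation using only standard properties of conditional expectation.

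The only point requiring care is measure-theoretic: one must know that a regular conditional distribution of $(Y_1,\dotsc,Y_n)$ given $Z$ exists (which holds here, since the $Y_i$ take values in a standard Borel space), and that ``conditionally exchangeable'' is to be read as permutation-invariance of that regular conditional distribution, holding for $\Prob_Z$-almost every value of $Z$. As flagged in the paper's footnote on measure-theoretic qualifications, nothing pathological occurs, so I would keep this bookkeeping implicit and present only the tower-property computation. I expect no genuine obstacle beyond making that interpretation explicit.
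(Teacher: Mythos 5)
Your argument is precisely the paper's own: condition on $Z$, invoke conditional permutation-invariance to swap $(Y_1,\dotsc,Y_n)$ for $(Y_{\pi(1)},\dotsc,Y_{\pi(n)})$ inside the conditional probability, and integrate out via the tower property. The added remarks on regular conditional distributions are sound but correspond to exactly the measure-theoretic qualifications the paper explicitly chooses to leave implicit.
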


\subsection{Uniform Interval Jitter} \label{Interval_Jitter_Section}

We begin by modelling a single discretized spike train as an $m$-length Bernoulli process:  a discrete time zero-one process with outcomes in $\{0,1\}^m.$ A sample outcome is of the form ${\bs x} := (x_1,...,x_m) \in \{0,1\}^m$, where $x_j$ indicates spike or no spike in time bin $j$. In interval jitter, the time axis is partitioned into $k$ equally-sized intervals consisting of $\Delta$ bins each.\footnote{In neurophysiological experiments described in the accompanying article, spike trains were discretized into bins of 1/30 ms. When the observation interval is not divisible by $\Delta$, there will be a final interval of size less than $\Delta$. To keep things simple here, we will treat any spikes in that interval as frozen. Formally, we are additionally conditioning on the location of spikes in the final interval. But for the presentation here we will ignore this detail.}  $\Gamma_i:= \{(i-1)\Delta+1,...,i\Delta\}$ denotes the set of time bins of the $i$'th window. 

Define the sequence of spike counts $N({\bs x})=(N_1({\bs x}), N_2({\bs x}), ..., N_l({\bs x}))$ as the {\it interval-counts}:
\begin{equation}
N_i({\bs x}) := \sum_{j \in \Gamma_i} x_j
\end{equation}

Finally, let $T({\bs x}),$ the {\it test statistic}, be a fixed function that assigns a real number to a spike train (e.g. the number of synchronies relative to a second, ``reference,'' neuron).

Let ${\bs X:=(X_1,...,X_m)}$ be a random spike train drawn from an unknown probability distribution, determined by the experimental setup. A surrogate data set ${\bs X^{(i)}:=(X^{(i)}_1, X^{(i)}_2,...,X^{(i)}_m)}$ is produced by sampling from the (conditional) distribution:
\begin{equation} \label{conddist}
P( \bs{X}^{(i)}=\bs{x} | N( \bs{X} )={\bs n} )= \frac{ \ind \{ N( \bs{x} )= {\bs n} \} }{ \sum_{{\bs y} \in \{0,1\}^m} \ind \{ N({\bs y})={\bs n} \} }
\end{equation}
That is, $\bs X^{(i)}$ is drawn independently and uniformly from the subset of spike train outcomes (i.e., of $\{0,1\}^m$) that satisfy $N( {\bs X}^{(i)} )=N( {\bs X}).$ Iterating this procedure $J$ times, we obtain $J$ (conditionally) independent samples ${\bs X^{(1)}, \bs X^{(2)},...,\bs X^{(J)}}$, drawn from the distribution (\ref{conddist}). The computational procedure for sampling from (\ref{conddist}) is straightforward: proceeding independently window-by-window, assign $N_i({\bs X})$ spikes to window $i$ by sampling from the uniform distribution on the $\binom{\Delta}{N_i({\bs X})}$ possible assignments.

Finally, we define
\begin{equation} 
 p( \bs{X}, \bs X^{(1)}, \bs X^{(2)},..., \bs X^{(J)} ) = \frac{ 1 + \sum_{i=1}^J \ind \{ T( \bs X^{(i)} ) \geq T( \bs X ) \} }{J+1}.
\end{equation}

\begin{proposition} {\bf (Uniform Interval Jitter) }

\noindent Define $H_0$ as

$H_0$ : the conditional distribution of  ${\bs X}$ given  ${N(\bs X)}$ is uniform, meaning that for all ${\bs x}$ and ${\bs n}$:
\begin{equation}
\Prob( {\bs X}={\bs x} | { N(\bs X)}={\bs n} ) = \frac{ \ind \{ N(\bs{x})=\bs{n} \} } { \sum_{\bs{y} \in \{0,1\}^m} \ind \{ N(\bs{y})=\bs{n} \} }.
\end{equation}
Then, $p( \bs{X}, \bs X^{(1)}, \bs X^{(2)},..., \bs X^{(J)} )$ is a $p$-value for $H_0$, for all choices of the statistic $T$.
\end{proposition}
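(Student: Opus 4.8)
The plan is to reduce the claim to Lemma~\ref{ExchTest} by showing that, under $H_0$, the $(J{+}1)$-tuple of test statistics $\bigl(T(\bs X), T(\bs X^{(1)}), \dotsc, T(\bs X^{(J)})\bigr)$ is exchangeable. Once that is in hand, set $Y_1 = T(\bs X)$ and $Y_{i+1} = T(\bs X^{(i)})$ for $i = 1,\dotsc,J$, with $n = J+1$; then $p(\bs X, \bs X^{(1)},\dotsc,\bs X^{(J)})$ equals $n^{-1}\sum_{i=1}^n \ind\{Y_i \geq Y_1\}$ exactly, because the ``$+1$'' in the numerator is precisely the term $\ind\{Y_1 \geq Y_1\}$. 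Lemma~\ref{ExchTest} then gives property~(\ref{p-value}) directly, which (in the Monte Carlo sense made explicit after~(\ref{crit_region})) is the definition of a $p$-value for $H_0$.

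The substance is the exchangeability of the $(J{+}1)$-tuple of \emph{spike trains} $\bigl(\bs X, \bs X^{(1)},\dotsc,\bs X^{(J)}\bigr)$, since applying $T$ coordinatewise preserves exchangeability trivially. To obtain this, I would condition on $Z := N(\bs X)$ and appeal to Lemma~\ref{CIID_Exch}. Fix an outcome $\bs n$ of $Z$ and let $A_{\bs n} := \{\bs y \in \{0,1\}^m : N(\bs y) = \bs n\}$, a finite set. By $H_0$, the conditional law of $\bs X$ given $N(\bs X)=\bs n$ is uniform on $A_{\bs n}$. By construction~(\ref{conddist}), each surrogate $\bs X^{(i)}$ is also uniform on $A_{\bs n}$ given $N(\bs X)=\bs n$; moreover the surrogates are generated by independent uniform draws that depend on the data only through $N(\bs X)$, so given $Z$ they are mutually independent and independent of $\bs X$. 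Hence, conditionally on $Z = \bs n$, the tuple $\bigl(\bs X, \bs X^{(1)},\dotsc,\bs X^{(J)}\bigr)$ is i.i.d.\ uniform on $A_{\bs n}$; in particular it is conditionally exchangeable given $Z$, and Lemma~\ref{CIID_Exch} upgrades this to unconditional exchangeability.

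The one point that deserves care — the only real obstacle — is the joint conditional independence structure: one must verify that $\bs X$ itself is one of the exchangeable coordinates, which is exactly what $H_0$ provides (absent $H_0$, the original data need not be uniform on $A_{\bs n}$, and exchangeability breaks; this is the distinction from basic jitter), and that the resampling randomness is independent of $\bs X$ given $N(\bs X)$, so that the tuple is a genuine conditionally i.i.d.\ sample rather than merely a collection of identically distributed pieces. A minor loose end is the incomplete-final-window convention of the footnote (spikes there are frozen, i.e.\ additionally conditioned on); this only refines the conditioning event and does not affect the argument. I would also note that the conclusion holds for every $J \geq 1$ and every statistic $T$, with no continuity or moment assumptions, since Lemma~\ref{ExchTest} requires none — ties among the $Y_i$ only make the bound $\Prob(p \leq u)\leq u$ easier.
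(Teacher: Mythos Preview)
Your proposal is correct and follows essentially the same approach as the paper: under $H_0$ and by construction, $\bs X, \bs X^{(1)},\dotsc,\bs X^{(J)}$ are conditionally i.i.d.\ given $N(\bs X)$, so Lemma~\ref{CIID_Exch} gives exchangeability and Lemma~\ref{ExchTest} gives the $p$-value property. The paper's own proof is a terse two-sentence version of exactly this argument; your additional remarks (on the role of $H_0$ in making $\bs X$ one of the exchangeable coordinates, on the ``$+1$'' matching $\ind\{Y_1\geq Y_1\}$, and on the final-window convention) are correct elaborations but not departures.
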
 \label{interval}
\begin{proof}
By construction and under $H_0, \bs X, \bs X^{(1)},..., \bs X^{(J)}$ are conditionally i.i.d., given $N( \bs X )$. It follows that $T({\bs X}), T(\bs X^{(1)}),..., T(\bs X^{(J)})$ are conditionally i.i.d., given $N(\bs X),$ and regardless of the choice of test statistic $T$. Therefore, the proposition follows from Lemmas \ref{ExchTest} and \ref{CIID_Exch}.
\end{proof}

The scientific motivation for the null hypothesis is treated in the main article. The following examples help to clarify its scope. 
\begin{example} 
{\bf (The case $\Delta$=1.)}
In the case $\Delta$=1, $H_0$ contains every spiking process. Thus, if we view the full hypothesis space for a 1 ms discretization as a nested hierarchy -- for example,
\begin{equation} 
\{ \Delta=100 \text{ ms}\} \subset \{\Delta=10 \text{ ms}\} \subset \{\Delta=1 \text{ ms}\},
\end{equation} 
-- then the complete hierarchy contains every probability distribution on $\bs X$. This is one of the senses in which the method is nonparametric.
\end{example}
\begin{example}
{\bf (Inhomogeneous Bernoulli Processes.)} $X_1,...,X_m$ is a Bernoulli process with parameters $p(1),...,p(m)$ where $\Pr(X_i=1)=p(i)$. The Bernoulli process is the discrete analogue of an inhomogeneous Poisson process, with $p(t)$ playing the role of the rate function. Inhomogeneous Bernoulli processes are included in the null hypothesis when the rate function $p(t)$ is constant along the intervals; i.e., when it takes the form
\begin{equation} \label{piecewise}
p(t) = \sum_{j=1}^l c_j \ind\{t \in \Gamma_j \}.  
\end{equation}
The rate function $p(t)$ can itself be random. Thus the null hypothesis includes random inhomogeneous Bernoulli processes which are drawn randomly from a distribution of rate functions satisfying (\ref{piecewise}). (The latter class corresponds to the case that the $c_j$'s are real-valued random variables with any joint distribution).
\end{example}
\begin{example}
{\bf (Deterministic $N(\bf X)$.)}
$H_0$ contains the case in which the experimental conditions exactly determine the counts in the $\Delta$ intervals, but the spike times are otherwise uniform. Thus in this case, if one took $H_0$ as an estimation model, the maximum likelihood estimate for $\Pr(N(\bf X))$ is the point distribution which produces the observed sequence of interval-counts with probability one. 
\end{example}
We note one final alternative characterization of the null. Loosely speaking, one way to motivate the perspective here is to interpret $N( \bs X )$ as containing the information about spiking structure at time scale $\Delta.$ Then, the null hypothesis is that all of the probabilistic structure in the spike train is contained in $N(\bs X)$. One formalization of that idea is through the following proposition.

\begin{proposition} {\bf (Probabilistic-invariance)}

\noindent The following conditions are equivalent:

\noindent i) $\Prob$ belongs to $H_0$. (i.e., $H_0$ is true)

\noindent ii) there exists a function $h$ such that $\Prob( \bs X=\bs x )=h( N( \bs x ) ).$

\end{proposition}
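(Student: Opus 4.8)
The plan is to prove the two implications separately, using throughout the single structural fact that $\bs x\mapsto N(\bs x)$ is a deterministic function, so that the event $\{\bs X=\bs x\}$ is contained in the event $\{N(\bs X)=N(\bs x)\}$. I will write $Z(\bs n)=\sum_{\bs y\in\{0,1\}^m}\ind\{N(\bs y)=\bs n\}$ for the number of spike trains compatible with the count vector $\bs n$.

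First I would handle (i) $\Rightarrow$ (ii). Assuming $\Prob\in H_0$, decompose, for every $\bs x$,
\[ \Prob(\bs X=\bs x)=\Prob\bigl(\bs X=\bs x\bigm|N(\bs X)=N(\bs x)\bigr)\,\Prob\bigl(N(\bs X)=N(\bs x)\bigr), \]
which is valid because $\{\bs X=\bs x\}\subseteq\{N(\bs X)=N(\bs x)\}$. By the definition of $H_0$ the first factor equals $1/Z(N(\bs x))$ whenever $N(\bs x)$ has positive probability (and both sides of the displayed identity vanish otherwise), so setting $h(\bs n):=\Prob(N(\bs X)=\bs n)/Z(\bs n)$ yields $\Prob(\bs X=\bs x)=h(N(\bs x))$, which is (ii).

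Next I would do (ii) $\Rightarrow$ (i). Suppose $\Prob(\bs X=\bs x)=h(N(\bs x))$ for some $h$. Fix $\bs n$ with $\Prob(N(\bs X)=\bs n)>0$; summing the identity over the $Z(\bs n)$ outcomes $\bs y$ with $N(\bs y)=\bs n$ gives $\Prob(N(\bs X)=\bs n)=Z(\bs n)\,h(\bs n)$, hence $h(\bs n)=\Prob(N(\bs X)=\bs n)/Z(\bs n)$. Then for any $\bs x$ with $N(\bs x)=\bs n$,
\[ \Prob\bigl(\bs X=\bs x\bigm|N(\bs X)=\bs n\bigr)=\frac{h(N(\bs x))}{\Prob(N(\bs X)=\bs n)}=\frac{1}{Z(\bs n)}, \]
and the conditional probability is $0$ when $N(\bs x)\ne\bs n$; this is exactly the uniform conditional law required by $H_0$. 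For count vectors $\bs n$ outside the support of $N(\bs X)$ the conditional distribution is unconstrained, so $H_0$ holds there by convention.

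The argument is essentially elementary bookkeeping with conditional probabilities, and I do not anticipate a genuine obstacle. The only point requiring a little care is the treatment of count vectors $\bs n$ that occur with probability zero, where $\Prob(\,\cdot\mid N(\bs X)=\bs n)$ is defined only by convention; I would simply remark that $H_0$, read as a constraint on conditioning events of positive probability, is insensitive to this choice, so both implications are unaffected by it.
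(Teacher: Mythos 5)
Your proof is correct and follows essentially the same route as the paper's, which is stated in a single terse sentence for each direction. You simply spell out the bookkeeping the paper leaves implicit: the factorization $\Prob(\bs X=\bs x)=\Prob(\bs X=\bs x\mid N(\bs X)=N(\bs x))\Prob(N(\bs X)=N(\bs x))$ for (i)$\Rightarrow$(ii), and the observation that (ii) forces $\Prob(\bs X=\bs x)$ to be constant on each level set of $N$ for (ii)$\Rightarrow$(i), together with the zero-probability caveat that the paper elides.
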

\begin{proof} If (i) is true, then (ii) follows from the definition of $H_0$ in Proposition \ref{interval}.  If (ii) is true, then clearly $\Prob(\bs{X}=\bs{x})$ is constant on $\{\bs{x}:N(\bs{x})=\bs{n}\}$ for each $\bs{n}$.  So $\Prob( \bs X=\bs x | \bs{N}(\bs{X})=\bs{n} )$ is uniform and (i) holds.
\end{proof}

\subsubsection{Multiple Spike Trains}

These methods generalize to any finite collection of $q$ spike trains $\bs X=\{\bs X_1, \bs X_2,...,\bs X_q\}$, under the same setup as the previous section.  The conditioning random variable is now $N(\bs X):=\{ N(\bs X_1); N(\bs X_2);...; N(\bs X_q)\}$, and the test statistic is any function of the collection $T( \bs X)$. Surrogate spike trains are formed as before, independently for each spike train, to form surrogate data set $\bs X^{(i)}= \{ \bs X_1^{(i)}, \bs X_2^{(i)},..., \bs X_q^{(i)}\}.$ The $p$-value is
\begin{equation}
p( \bs{X}, \bs X^{(1)}, \bs X^{(2)},..., \bs X^{(J)} ) = \frac{ 1 + \sum_{i=1}^J \ind \{ T( \bs X^{(i)} ) \geq T( \bs X ) \} }{J+1}.
\end{equation}
The statement and justification of the hypothesis test are, using the multiple spike train notation, the same as in the previous section. Generalizations to multiple spike trains of other versions of jitter (below) are equally straightforward, and will not be re-stated for each case.

\subsubsection{Continuous Time} \label{Cont_Interval}

Alternatively, one can set up jitter in the continuum. We represent the spike times $\bs t:=(t_1,t_2,...t_K)$ where $K$ is the total number of spikes in the spike train. Then let $\Omega_k:=[a_k,a_k+\Delta]$ be the {\it jitter window} associated with spike $k,$ specified by $a_k=\lfloor{t_k/\Delta}\rfloor \Delta,$ where we denote $\lfloor{x}\rfloor,$ the {\it floor} of $x,$ as the largest integer less than or equal to $x$. To form surrogate data set $\bs t^{(i)}$, we assign independently to each  $t_k^{(i)}$ a sample from the uniform distribution on $\Omega_k.$ So $t^{(i)}_1,t^{(i)}_2,...,t^{(i)}_K$, are conditionally independent, given $\Omega_1, \Omega_2,...,\Omega_K$, with conditional density
\begin{equation}
f_k(x) = \frac{ \ind \{ x \in \Omega_k \} }{\Delta}
\end{equation}
In the continuum representation, we will denote the interval counts as $N(\bs t)$. (Note that we are ``overloading'' the functions $N, T,$ and $p,$ by borrowing the notation from the discrete setup.) There is an invertible mapping from $\{\Omega_k\}_{k=1}^K$ to $N( \bs t )$ modulo permutations, so they determine an equivalent conditioning. Given a test statistic $T(\bs t)$, the $p$-value and the test are essentially the same as in the discrete case:
\begin{equation} \label{contint_pstat}
 p( \bs{t}, \bs t^{(1)}, \bs t^{(2)},..., \bs t^{(J)} ) = \frac{ 1 + \sum_{i=1}^J \ind \{ T( \bs t^{(i)} ) \geq T( \bs t ) \} }{J+1}.
\end{equation}

\begin{proposition} {\bf (Continuous Uniform Interval Jitter) }

\noindent Define $H_0$ as

$H_0$ : the conditional distribution of  ${\bs t}$ given  ${N(\bs t)}$ is uniform, meaning that its conditional density given $N(\bs t)$ is:
\begin{equation}
f( (t_1,...,t_k) | {\bs N(\bs{t})} ) = \frac{1}{\Delta^K} \prod_{k=1}^K \ind \{ t_k \in \Omega_k \}
\end{equation}
Then, $ p( \bs{t}, \bs t^{(1)}, \bs t^{(2)},..., \bs t^{(J)} )$ is a $p$-value for $H_0$, for all choices of the statistic $T$.
\end{proposition}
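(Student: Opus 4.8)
The plan is to reproduce, in the continuum, the argument used for the discrete Uniform Interval Jitter proposition: show that under $H_0$ the data and its surrogates are conditionally i.i.d.\ given the conditioning statistic, deduce the same for the induced test statistics, and then invoke Lemmas \ref{ExchTest} and \ref{CIID_Exch}. Concretely, I would first pin down the conditioning variable. Since $K=\sum_i N_i(\bs t)$ is itself a function of $N(\bs t)$, conditioning on $N(\bs t)$ also fixes the number of spikes and, up to a permutation of indices, the whole collection of jitter windows $\Omega_1,\dots,\Omega_K$ (this is the invertible correspondence noted just above the statement); so conditioning on $N(\bs t)$ and conditioning on $(\Omega_1,\dots,\Omega_K)$ carry the same information, and I may freely switch between the two.

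Next I would identify the conditional law of each object. By the hypothesis $H_0$, the conditional density of $\bs t$ given $N(\bs t)$ is $\Delta^{-K}\prod_{k=1}^K \ind\{t_k\in\Omega_k\}$. By construction, each surrogate $\bs t^{(i)}$ assigns, independently, a uniform draw on $\Omega_k$ for every $k$, so its conditional density given $(\Omega_1,\dots,\Omega_K)$ --- equivalently given $N(\bs t)$ --- is exactly the same product of uniforms; moreover $N(\bs t^{(i)})=N(\bs t)$, so the conditioning value is preserved. Since the $J$ surrogates are generated independently of one another and of $\bs t$ given the windows, the vectors $\bs t,\bs t^{(1)},\dots,\bs t^{(J)}$ are conditionally i.i.d.\ given $N(\bs t)$. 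Applying the (deterministic) map $T$ to each, the real numbers $T(\bs t),T(\bs t^{(1)}),\dots,T(\bs t^{(J)})$ are conditionally i.i.d.\ given $N(\bs t)$, for any choice of $T$. By Lemma \ref{CIID_Exch} these $J+1$ random variables are (unconditionally) exchangeable, and by Lemma \ref{ExchTest}, taking $n=J+1$, $Y_1=T(\bs t)$ and $Y_{i+1}=T(\bs t^{(i)})$, the statistic $p(\bs t,\bs t^{(1)},\dots,\bs t^{(J)})$ of $(\ref{contint_pstat})$ is a $p$-value for $H_0$.

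The only real obstacle is bookkeeping rather than substance: one must check that the surrogate construction is genuinely a draw from the $H_0$ conditional density and not from some reordered version of it, since surrogate spike times may need to be re-sorted and adjacent windows can in principle share an endpoint. Because $T$ is a function of the spike train as an unordered configuration, and the set of boundary coincidences has Lebesgue measure zero, neither point affects the argument; this is the sense of the footnote that ``nothing unusual comes up.'' Otherwise the proof is a verbatim translation of the discrete case, with conditional densities replacing conditional mass functions and the windows $\Omega_k$ playing the role of the index sets $\Gamma_i$.
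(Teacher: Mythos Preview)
Your proposal is correct and follows exactly the paper's approach: the paper's own proof is a single sentence noting that under $H_0$ the data and surrogates are conditionally i.i.d.\ given $N(\bs t)$ and then deferring to the discrete case, which is precisely what you spell out (with the added bookkeeping about the $\Omega_k\leftrightarrow N(\bs t)$ correspondence and the measure-zero caveats). Nothing is missing.
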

\begin{proof}
Under $H_0, \bs t, \bs t^{(1)},..., \bs t^{(J)}$ are conditionally i.i.d., given $N( \bs t )$, so the logic is the same as in the discrete case.
\end{proof}

\begin{example}
{\bf (Poisson and Cox Processes)}
Analogous to the inhomogeneous Bernoulli process in the discrete setup, the null includes Poisson processes with rate $\lambda(t)$ of the form
\begin{equation} \label{poisscox}
\lambda(t)=\sum_{j=1}^l c_j \ind\{t \in \Gamma_j\}.
\end{equation}
Included as well are Poisson processes with stochastic rate functions (``Cox processes''), in which the rate function is drawn randomly from a distribution of rate functions of the form of (\ref{poisscox}); that is, the $c_j$'s are random variables with any joint distribution. Intuitively, one can think about this as a test for the class of processes  well-approximated by Cox processes whose rates are  piecewise-constant in the interval partitions. This is a notion of a ``Cox process varying at timescale $\Delta.$'' (Note, however, that the null is far broader than the class of Cox processes because no constraints are placed on the distribution of spike counts in the $\Delta-$length intervals.)
\end{example}

\subsection{Basic Jitter} \label{Basic_Jitter_Section}

Basic jitter is similar to continuous interval jitter, except the jitter windows are centered around the spikes in the original data set. Following the development of Section \ref{Cont_Interval}, take instead $a_k=t_k-\Delta/2$. With respect to the resampling scheme,  everything else is the same. However, in this scheme, $\bs t, \bs t^{(1)},...,\bs t^{(J)}$ are not exchangeable regardless\footnote{Excluding the deterministic zero-spike distribution.} of the distribution $\Pr( \bs t ).$ For this reason, an analogue of Proposition \ref{interval} cannot hold for arbitrary $T$, and we recommend thinking of basic jitter as a heuristic procedure, and with some caution, particularly for complex choices of the statistic $T$. Interval jitter can be motivated as one possible way to set up a spike timing null hypothesis under which a specifiable resampling procedure preserves exchangeability.

\subsection{Pattern Jitter} \label{Pattern_Jitter_Section}

Some forms of fine temporal structure may be deemed uninteresting, and if the test statistic is sensitive to that form of structure, a jitter rejection then might be consequently uninteresting. The generic examples in the spike-resampling literature \cite{oram99} are the local interspike interval constraints, such as bursts and the refractory periods. These phenomena can impose hard or soft constraints on local interspike intervals. Pattern jitter extends the elementary spike jitter procedures so as to separate out the influence of such constraints. Here we follow the development of \cite{harrison-geman}.

Suppose an individual spike train $\bs t:=(t_1,t_2,...,t_K), $ is represented as an increasing sequence of spike times. $\bs t$ is broken down into the {\it patterns} which compose it. A pattern consists of a subsegment of the spike train for which all neighboring spikes $(t_i, t_{i+1})$ have interspike interval $t_{i+1}-t_i\leq R$, and which is separated from all spikes not in the pattern by a distance greater than $R$. The parameter $R$ thus encodes the length of history constraint which spike times must respect to preserve patterns. Under the constraint that the identity of all patterns thus defined is preserved, patterns are rigidly and uniformly jittered by jittering the pattern's first (earliest) spike uniformly within its jittering window. (The jitter windows can be chosen as in interval or basic jitter; the issues are the same: the two choices lead to procedures that are often close approximations of each other, but when windows are chosen as in interval jitter there is a corresponding exact hypothesis test.) We use $\Omega_k=[a_k,a_k+\Delta]$ to denote the jitter windows in what follows.

To formalize the idea we specify a (vector-valued) statistic $S(\bs t)$ that determines the component patterns.
\begin{equation}
S_{i1}(\bs t) := \begin{cases} a_k & \text{if $t_i-t_{i-1}>R$} \\ t_i-t_{i-1} & \text{otherwise} \end{cases}
\end{equation}
\begin{equation}
S_{i2}(\bs t) := \begin{cases} 1 & \text{if $t_i-t_{i-1}>R$} \\ 0 & \text{otherwise} \end{cases}.
\end{equation}
In words, given a spike train of $K$ spikes, $S_{k2}$ determines whether spike $k$ is the first spike in a pattern, or not. If $k$ is a first spike, $S_{k1}$ determines the (starting point of the) $\Delta-$interval that contains the first spike, otherwise $S_{k1}$ specifies the interspike interval between spike $k$ and the neighboring spike that precedes it, in its pattern. Thus $S(\bs t)$ determines the pattern-composition of a spike train, and the jitter window which contains the first spike in each pattern. Pattern jitter resamples are drawn uniformly from the set of spike outcomes consistent with $S(\bs t)$. That is, $\bs t^{(1)},\bs t^{(2)},...,\bs t^{(J)}$ are conditionally independent given $S(\bs t)$ and drawn from 
\begin{equation}
\Pr( \bs{t}^{(i)} | S(\bs t )) = \frac{1}{Z(S(\bs t))} \ind\{S(\bs t^{(i)})=S(\bs t)\},
\end{equation}
where $Z(S(\bs t))$ is the normalizing constant determined by the constraint that the conditional distribution sum to one. The computational problem of sampling from this distribution is non-trivial; an efficient procedure is developed in \cite{harrison-geman}.

If we choose the jitter windows as in interval jitter, then the function $p$ (e.g., Eq. \ref{contint_pstat}) is a $p-$value under the null hypothesis that the conditional distribution on $\bs t,$ given $S(\bs t)$ is uniform. The reasoning, via exchangeability, is just as in interval jitter. The procedure can be set up either in discrete or continuous time.

\subsection{Tilted Jitter} \label{Tilted_Jitter}

Fix $\Delta$, and consider now the continuous time representation of the spike train $\bs t = (t_1,t_2,...,t_K),$ with corresponding intervals $\Omega_1,\Omega_2,...,\Omega_K,$ with $\Omega_k=[a_k,a_k+\Delta]$.  In the conditional inference perspective developed above, a null hypothesis is specified through the conditional uniform distribution on the spike train given $N(\bs t).$ For example, under the continuous interval jitter null hypothesis, $t_1,t_2,...,t_K$ are conditionally independent given $N(\bs t)$, and the conditional density of $t_k$ given $N(\bs t)$ is taken as uniform:
\begin{equation}
f_k(x) = \frac{ \ind \{x \in \Omega_k\} } {|\Omega_k|}.
\end{equation}
A natural way to parametrically relax the hypothesis of uniformity is to broaden the uniform conditional uniform density $f_k(x)$ to a  class of smooth density functions whose variation is restricted by a parametric bound. Here we quantify variation for a density $f$ by the functional $\alpha(f)$:
\begin{equation}
\alpha(f) :=  \frac{ \max_{x \in [0,\Delta]} f(x)}{ \min_{x \in [0,\Delta]} f(x)} - 1,
\end{equation}
and define a parametric class of densities by
\begin{equation}
\mathcal{F}(S,\Delta,\epsilon) = \left\{ f \in S : \alpha(f) < \epsilon \right\},
\end{equation}
where $S$ is a specified class of smooth (for example, linear) densities on $[0,\Delta]$.

We will restrict ourselves to synchrony analysis in what follows. Let us denote a simultaneously recorded spike train $\bs s:= (s_1,s_2,...,s_{K_s})$. We will treat $\bs s$ as fixed. Defining
\begin{equation}
C := \left\{ x : \min_{1 \leq j \leq K_s} |s_j -x| \leq \delta \right\},
\end{equation}
our synchrony statistic is
\begin{equation}
v(\bs t) := \sum_{k=1}^K \ind \{t_k \in C\},
\end{equation}
which is the number of spikes in $\bs t$ that are within $\pm \delta$ of some spike in $\bs s$.

Now we will specify a resampling procedure. For each $k$, we compute
\begin{equation}
f_k^* \in \argmax_{f \in \mathcal{F}(S,\Delta,\epsilon)} \int_{C \cap \Omega_k} f(x-a_k)dx.
\end{equation}
(In the case that the $\argmax$ contains multiple elements it does not matter which density we choose). Methods of computing $f_k^*$, for various classes of smooth densities $S$, are provided in Section \ref{s:max}. To form surrogate data set $\bs t^{(i)},$ we then assign independently to each $t_k^{(i)}$ a sample from the density $f_k^*.$ The function $p$ is defined as
\begin{equation}
 p( \bs{t}, \bs t^{(1)}, \bs t^{(2)},..., \bs t^{(J)} ) = \frac{ 1 + \sum_{i=1}^J \ind \{ v( \bs t^{(i)} ) \geq v( \bs t ) \} }{J+1}.
\end{equation}

\begin{proposition} {\bf (Tilted Jitter Synchrony Test) }

\noindent Define $H_0$ as

$H_0$ : under the conditional distribution of  ${\bs t}$ given  ${N(\bs t)}$, $t_1,...,t_K$ are conditionally independent with conditional densities $f_k$ satisfying:
\begin{equation}
f_k( x_k-a_k | N(\bs t) ) \in \mathcal{F}(S,\Delta,\epsilon) \quad \forall \, k \in \{1,2,...K\}
\end{equation}
Then, $ p( \bs{t}, \bs t^{(1)}, \bs t^{(2)},..., \bs t^{(J)} ) $ is a $p$-value for $H_0$.
\end{proposition}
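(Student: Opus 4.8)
The plan is to mirror the exchangeability argument used for uniform interval jitter, but with one essential change: here the collection $\bigl(v(\bs t),v(\bs t^{(1)}),\dots,v(\bs t^{(J)})\bigr)$ is \emph{not} exchangeable, because the surrogates are drawn from the synchrony-maximizing densities $f_k^*$ rather than from the same law as the data. The fix is to replace exchangeability by a one-sided stochastic domination together with a coupling to an auxiliary exchangeable collection, and then invoke Lemma~\ref{ExchTest}. Throughout, $\bs s$ (hence $C$) is treated as fixed. Since $H_0$ imposes no constraint on the marginal law of $N(\bs t)$, it suffices to prove $\Prob\bigl(p\le u \mid N(\bs t)=\bs n\bigr)\le u$ for each $\bs n$ and each $u\in[0,1]$ and then integrate.

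First I would record the conditional structure given $N(\bs t)=\bs n$. Under $H_0$ the spikes $t_1,\dots,t_K$ are conditionally independent with the centered position $t_k-a_k$ having density $f_k\in\mathcal F(S,\Delta,\epsilon)$ on $[0,\Delta]$, so $v(\bs t)=\sum_{k=1}^K\ind\{t_k\in C\}$ is a sum of $K$ independent Bernoulli variables with success probabilities $q_k:=\int_{C\cap\Omega_k}f_k(x-a_k)\,dx$. On the other hand, the jitter window $\Omega_k$ of a spike, and therefore the maximizer $f_k^*$, is determined by which $\Delta$-interval the spike lies in, which is encoded in $\bs n$; hence conditionally on $N(\bs t)=\bs n$ the surrogates $\bs t^{(1)},\dots,\bs t^{(J)}$ are i.i.d.\ and independent of $\bs t$, and each $v(\bs t^{(i)})$ is a sum of $K$ independent Bernoulli variables with success probabilities $q_k^*:=\int_{C\cap\Omega_k}f_k^*(x-a_k)\,dx$. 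Because $f_k^*$ maximizes $f\mapsto\int_{C\cap\Omega_k}f(x-a_k)\,dx$ over $\mathcal F(S,\Delta,\epsilon)$ and $f_k$ lies in that class, we obtain the key one-sided inequality $q_k\le q_k^*$ for every $k$.

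The remaining steps are standard. Since $q_k\le q_k^*$ for all $k$, a componentwise coupling of Bernoulli pairs (one shared uniform per index) shows that, conditionally on $N(\bs t)=\bs n$, the law of $v(\bs t^{(i)})$ stochastically dominates that of $v(\bs t)$. I would then enlarge the probability space to carry a variable $V_0^*$ with the conditional law of $v(\bs t^{(i)})$, coupled so that $V_0^*\ge v(\bs t)$ almost surely while $\bigl(v(\bs t),V_0^*\bigr)$ remains conditionally independent of $V_i:=v(\bs t^{(i)})$, $i=1,\dots,J$, given $N(\bs t)$. Then $V_0^*,V_1,\dots,V_J$ are conditionally i.i.d.\ given $N(\bs t)$, hence conditionally exchangeable, and Lemma~\ref{ExchTest} applied conditionally (justified by Lemma~\ref{CIID_Exch}) gives $\Prob\bigl(p^*\le u\mid N(\bs t)=\bs n\bigr)\le u$ for $p^*:=\bigl(1+\sum_{i=1}^J\ind\{V_i\ge V_0^*\}\bigr)/(J+1)$. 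Finally $V_0^*\ge v(\bs t)$ forces $\ind\{V_i\ge V_0^*\}\le\ind\{V_i\ge v(\bs t)\}$ for each $i$, so $p^*\le p$, whence $\{p\le u\}\subseteq\{p^*\le u\}$ and $\Prob\bigl(p\le u\mid N(\bs t)=\bs n\bigr)\le u$; integrating over $N(\bs t)$ completes the argument.

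The main obstacle is conceptual rather than computational: one must recognize that validity here does \emph{not} come from exchangeability of the data with its surrogates (which fails for tilted jitter), but from the fact that $v$ is a coordinatewise nondecreasing function of the ``spike lands in $C$'' indicators, combined with the one-sided domination $q_k\le q_k^*$ guaranteed by the definition of $f_k^*$; the coupling to $V_0^*$ is then the device that reduces the problem to the already-established exchangeable case of Lemma~\ref{ExchTest}. Two minor points deserve a line of care: that the surrogate law genuinely depends on the data only through $N(\bs t)$ (so that $V_1,\dots,V_J$ are conditionally independent of $V_0$), and that stochastic domination of sums follows from componentwise domination of independent summands; both are routine, as is the existence of the $\argmax$ defining $f_k^*$ when $S$ is a compact class such as bounded-slope linear densities.
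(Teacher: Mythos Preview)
Your argument is correct and rests on the same three ingredients as the paper's proof: the key inequality $q_k\le q_k^*$ (the paper writes $p_k\le p_k^*$) coming from the definition of $f_k^*$ as an $\argmax$, a Bernoulli coupling that turns this into an almost-sure inequality, and a reduction to the exchangeable case handled by Lemma~\ref{ExchTest}.

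The one noteworthy difference is the \emph{direction} of the coupling. The paper leaves $\bs t$ untouched and \emph{thins the surrogates down}: for each $i$ it builds $\bs r^{(i)}$ from $\bs t^{(i)}$ by, whenever $t_k^{(i)}\in C$, keeping it with probability $p_k/p_k^*$ and otherwise moving it outside $C$. This yields $v(\bs r^{(i)})\le v(\bs t^{(i)})$ almost surely and makes $v(\bs t),v(\bs r^{(1)}),\dots,v(\bs r^{(J)})$ conditionally i.i.d.\ given $N(\bs t)$. You instead leave the surrogates untouched and \emph{bump the data up}, constructing $V_0^*\ge v(\bs t)$ with the same conditional law as the $v(\bs t^{(i)})$. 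Both routes produce an auxiliary $p^*\le p$ with $p^*$ a valid p-value by exchangeability, so they are dual implementations of the same stochastic-domination idea. The paper's version has the mild expository advantage that the thinning map is explicit at the level of spike locations (not just indicator variables), while your version makes the role of stochastic domination slightly more transparent; neither buys any extra generality here.
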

\begin{proof}

In this case the conditional null hypothesis is ``compound'' (containing many distributions), so exchangeability does not necessarily hold. Instead, we are drawing the surrogates from the ``worst-case'' distribution in the conditional null. 
The plan (for the proof) is to randomly perturb the spike times in each $\bs t^{(i)}$ to generate a new surrogate $\bs r^{(i)}$
in such a way that $v(\bs t), v(\bs r^{(1)}),v(\bs r^{(2)}),...,v(\bs r^{(J)})$ are conditionally i.i.d. given $N(\bs t)$,
and hence exchangeable, and furthermore 
\begin{equation} \label{e:bound}
v( \bs r^{(i)} ) \leq v( \bs t^{(i)} ), \quad \forall \, i \in \{1,2,...,J\}.
\end{equation}
Since by exchangeability $p( \bs t, \bs r^{(1)}, \bs r^{(2)},..., \bs r^{(J)} )$ is a $p$-value (under $H_0$), and since by equation \eqref{e:bound}
\begin{equation}
p( \bs t, \bs r^{(1)}, \bs r^{(2)},..., \bs r^{(J)} ) \leq p( \bs{t}, \bs t^{(1)}, \bs t^{(2)},..., \bs t^{(J)} )
\end{equation}
the proposition follows.

It remains to show how to generate $\bs r^{(i)}$ from $\bs t^{(i)}$, for each $i=1,2,\ldots,J$.
Fix a distribution $\Pr \in H_0$. Then by assumption (i.e. by $H_0$), $\{ \ind\{t_k \in C\}\}_{k=1}^K$ are conditionally independent Bernoulli random variables (r.v.'s) given $N(\bs t)$, with parameters $p_k := \Pr ( t_k=C | N(\bs t) )$. Similarly, $\{ \ind\{t_k^{(i)} \in C \}\}_{k=1}^K$ are conditionally independent Bernoulli r.v.'s, given $N(\bs t)$, with respective parameters $p_k^*:=\Pr( t_k^{(i)} \in C | N(\bs t) ).$ 
By construction, $p_k \leq p_k^*$. 
Now, for every $i=1,2,\ldots,J$ and every $k=1,2,\ldots,K$, independently
generate $r^{(i)}_k$ by the (random) prescription
\begin{equation}
r_k^{(i)} := \begin{cases} t_k^{(i)} & \text{if $t_k^{(i)} \notin C$} \\ t_k^{(i)} & \text{with probability $p_k/p_k^*$ if $t_k^{(i)} \in C$} \\ y_k & \text{otherwise}\end{cases},
\end{equation}
where $y_k$ is an arbitrarily chosen point in $\Omega_k \cap C^C$ (the distribution does not matter).
By construction, $\forall i=1,2,\ldots,J$, $\{ \ind\{r_k^{(i)} \in C\}\}_{k=1}^K$ are conditionally independent Bernoulli r.v.'s, given $N(\bs t),$ with parameters $p_1, p_2, ..., p_K$, and 
$\bs r^{(1)},\bs r^{(2)},...,\bs r^{(J)}$ are conditionally i.i.d.~given $N(\bs t)$.
Hence $v(\bs t), v(\bs r^{(1)}),v(\bs r^{(2)}),...,v(\bs r^{(J)})$
are conditionally i.i.d given $N(\bs t).$ Furthermore, $v( \bs r^{(i)} ) \leq v( \bs t^{(i)} ), \forall \, i \in \{1,2,...,J\}$ by definition.


\end{proof}



\subsection{Deterministic tests} \label{deterministic}

In several cases deterministic (non-stochastic) versions of the tests developed above are available. To give the idea, we will develop a single example here for synchrony-count test statistics, applied to a continuous uniform interval jitter test for a single spike train $\bs t.$ (A deterministic test can be developed for the tilted jitter null hypothesis in an analogous way.) The setup is as in tilted jitter (Section \ref{Tilted_Jitter}), in which we condition on the spike positions $\bs s$ of a simultaneously-observed spike train, and define
\begin{equation}
C := \left\{ x : \min_{1 \leq j \leq K_s} |s_j -x| \leq \delta \right\}
\end{equation}
through which our synchrony statistic is
\begin{equation}
v(\bs t) := \sum_{k=1}^K \ind \{t_k \in C\},
\end{equation}
which is the number of spikes in $\bs t$ that are within $\pm \delta$ of some spike in $\bs s$. Define in addition
\begin{equation}
v_k(\bs t) := \ind \{t_k \in C\},
\end{equation}
so that $v(\bs t)=\sum_{k=1}^K v_k(\bs t).$ Under $H_0, v_1(\bs t), v_2(\bs t),..., v_K(\bs t)$ are conditionally independent Bernoulli r.v.'s, given $N(\bs t)$,  with parameters $p_k(N(\bs t)):=P( v_k(\bs t) = 1 | N(\bs t) )=|C \cap \Omega_k|$. In this case, the conditional distribution of $v(\bs t)$ can be computed exactly (e.g., by convolution). It has the form
\begin{equation}
G( c; N(\bs t) ) := \Pr(v(\bs t) \geq c | N(\bs t)) := \Pr( \textstyle\sum_{k=1}^K Y_k \geq c ),
\end{equation}
where $Y_1,Y_2,...Y_{K}$ are independent Bernoulli random variables with parameters $\{p_k(N(\bs t))\}_{j=1}^K.$  Once $G$ is computed, the p-value is
\begin{equation}
G(v(\bs t);N(\bs t))
\end{equation} 

\section{Obtaining the optimization solution $f^*$} \label{s:max}

Tilted jitter requires finding an extremal spike-relocation distribution, constrained by the particular null hypothesis
being tested (see Section A.5, and Section 3.1 of the main text as well as its elaboration in this Supplement). Here we work through the calculation in a few examples.

Fix $R\subseteq[0,1]$ and let $f$ denote a generic probability density function (pdf) on $[0,1]$.  We want to maximize $\int_R f$ subject to the constraint that $\alpha(f)\leq\epsilon$ for some fixed $\epsilon \geq 0$, where
\[ \alpha(f) := \frac{\max_{0\leq x \leq 1} f(x)}{\min_{0\leq x \leq 1} f(x)}-1  , \]
and perhaps also subject to additional smoothness constraints, like linearity of $f$.  Define $M_\epsilon := \{ f:\alpha(f)\leq\epsilon\}$ and let $S$ be the set of pdfs that satisfy the smoothness constraints.  Then the problem of interest is to find any
\[ f^* \in \argmax_{f\in M_\epsilon\cap S} \int_R f(x) dx . \]
We always assume that the uniform (constant) pdf satisfies the smoothness constraints so that $M_\epsilon\cap S\neq\emptyset$.  If $\int_R dx \in \{0,1\}$, then any $f$ is an $\argmax$ and in practice we take $f^*$ to be the uniform pdf.  Henceforth, we assume that 
\[ 0 < |R| := \int_R dx < 1 . \]

Note that $\int_R f$ is linear in $f$ and that $M_\epsilon$ is closed and convex.  If $S$ is also closed and convex, then $M_\epsilon\cap S$ is closed and convex and a maximizer $f^*$ exists and can be found on the extremal points of $M_\epsilon\cap S$. 

\subsection{No smoothness constraints}

If $S:=\{\text{all pdfs on $[0,1]$}\}$ so that we are maximizing over $M_\epsilon$, then a solution is
\[ f^*(x) := \frac{1+\epsilon\ind\{x\in R\}}{1+\epsilon|R|} , \]
where $\ind\{A\}$ is the indicator function of the event $A$ and where $x\in[0,1]$.
\begin{proof}
It is easy to verify that $\alpha(f^*)=\epsilon$ and that 
\begin{equation} \int_R f^*(x) dx = \frac{(1+\epsilon)|R|}{1+\epsilon |R|} := \beta . \label{e:1} \end{equation}
Now suppose that $f$ is a pdf with $\int_R f > \beta$.  This gives 
\[ \max_x f(x) \geq \frac{\int_R f(x) dx}{\int_R dx} > \frac{\beta}{|R|} = \frac{1+\epsilon}{1+\epsilon|R|} \]
and
\[ \min_x f(x) \leq \frac{\int_{R^c} f(x) dx}{\int_{R^c} dx} < \frac{1-\beta}{1-|R|} = \frac{1}{1+\epsilon|R|} . \]
Combining these gives $\alpha(f) > \epsilon$ and shows that $f^*$ achieves the maximum.
\end{proof}

\subsection{Finite mixture smoothness constraints}

Let $\{g_1,\dotsc,g_m\} \in M_\epsilon$ be fixed and known pdfs.\footnote{If it is not known that each $g_k\in M_\epsilon$, then define $h_k=g_k-1$ and use the methods in Section \ref{s:basis}, perhaps with the additional restriction that $a$ is a probability vector.}  Suppose that $S$ is the set of mixtures of the $g_k$'s, namely,
\[ S:=\left\{ \sum_{k=1}^m p_k g_k : \sum_{k=1}^m p_k = 1 \text{ and each } p_k \geq 0\right\} . \]
Note that $S\subset M_\epsilon$, so we are maximizing over $S$.  Since $S$ is closed and convex, we can take $f^*$ to be an extremal point.  The extremal points of $S$ are simply $\{g_1,\dotsc,g_m\}$.  To find $f^*$ we compute $\int_R g_k$ for each $k$ and choose the one that gives the largest integral.
 
\subsection{Finite basis smoothness constraints} \label{s:basis}

Let $\{h_1,\dotsc,h_m\}$ be fixed and known functions on $[0,1]$ that each integrate to $0$.  For any $a:=(a_1,\dotsc,a_m)\in\mathbb{R}^m$ define the function $f_a$ on $[0,1]$ by
\[ f_a(x) := 1 + \sum_{k=1}^m a_k h_k(x)  \]
and let
\[ A := \left\{a\in\mathbb{R}^m : \min_{0\leq x \leq 1} f(x) \geq 0\right\} , \]
so that $f_a$ is a valid pdf whenever $a\in A$.  Finally, define
\[ A_\epsilon := \left\{a\in A : \alpha(f_a) \leq \epsilon\right\} . \]
  
Suppose that $S:=\{f_a : a\in A\}$.  Notice that optimizing over $M_\epsilon\cap S$ is equivalent to optimizing over $A_\epsilon$.  The computational problem, then, is to find a maximizer of the function
\[ \int_R f_a(x) dx = \int_R dx + \sum_{k=1}^m a_k \int_R h_k(x) dx \]
over $a\in A_\epsilon$.  Defining $H:=(H_1,\dotsc,H_m)$ for $H_k := \int_R h_k(x) dx$, which are known, the computational problem reduces to finding
\[ a^* \in \argmax_{a\in A_\epsilon} \ a\!\cdot\!H , \]
where $a\!\cdot\!H$ denotes the dot product.
Note that $a\!\cdot\!H$ is linear in $a$ and $A_\epsilon$ is closed and convex, so we can restrict attention to the extremal points of $A_\epsilon$.

There exist generic algorithms for solving convex optimization problems of this kind, although for specific instances it can be computationally much quicker to use custom algorithms based on further analysis.  

\subsubsection{Linear smoothness constraints}

A simple example is the case where $m=1$ and $h_1(x):=2x-1$, so that $S$ is the class of linear pdfs.  Noting that  
\[ \alpha(f_a) = \frac{1+|a|}{1-|a|}-1 \]
we see that 
\[ A_\epsilon = \left[-\frac{\epsilon}{\epsilon+2},\frac{\epsilon}{\epsilon+2}\right]  , \]
so the extremal points of $A_\epsilon$ are the two points $-\epsilon/(\epsilon+2)$ and $\epsilon/(\epsilon+2)$.  The values of $a\!\cdot\!H$ at these extremal points are $-H\epsilon/(\epsilon+2)$ and $H\epsilon/(\epsilon+2)$, respectively.  Choosing the maximum and then the corresponding $f_a$ gives the solution
\[ f^*(x) = \begin{cases} 1+\displaystyle\frac{\epsilon}{\epsilon+2\vphantom{\bigl)}}(2x-1) & \text{if $\int_R (2x-1) dx > 0$,} \\ 
1-\displaystyle\frac{\epsilon\vphantom{\bigl)}}{\epsilon+2\vphantom{\bigl)}}(2x-1) & \text{if $\int_R (2x-1) dx < 0$,} \\
1\vphantom{\bigl)} & \text{otherwise .} \end{cases} \]

\end{document}